\newtheorem{definition}{Definition}[section]
\newtheorem{proposition}{Proposition}[section]
\newenvironment{example}{\noindent{\em Example:}}{\bigskip}
\newenvironment{remark}{\noindent{\em Remark:}}{\bigskip}
\newenvironment{proof}{\noindent{\em Proof:}}{$\Box$ \bigskip}
\begin{document}

\title{A computational definition of the notion of vectorial space}
\author{Pablo Arrighi\thanks{
Institut Gaspard Monge,
5 Bd Descartes, Champs-sur-Marne,
77574 Marne-la-Vall\'ee Cedex 2, France,
{\tt arrighi@univ-mlv.fr.}}
\and
Gilles Dowek\thanks{
\'Ecole polytechnique and INRIA,
LIX, \'Ecole polytechnique,
91128 Palaiseau Cedex, France,
{\tt Gilles.Dowek@polytechnique.fr.}}}
\date{\vspace{-.5in}}

\maketitle

\begin{abstract}
We usually define an algebraic structure
 by a set, some operations defined on this
set and some propositions that the algebraic structure must validate.
In some cases, we can replace these propositions
by an algorithm on terms constructed upon these operations
that the algebraic structure must validate. We show in this note that this is 
the case for the notions of vectorial space and bilinear operation.
\end{abstract}

%\begin{keyword}
%Rewrite system, vectorial space, bilinear operation, tensorial product
%\end{keyword}
%\end{frontmatter}

\thispagestyle{empty}

An algorithm defined by a confluent and terminating rewrite system $R$
on terms of a language 
${\cal L}$ is said to be {\em valid} in a structure ${\cal M}$ on the
language ${\cal L}$ if for each rule $l \longrightarrow r$ and
assignment $\phi$, we have $\llbracket l
\rrbracket_{\phi} = \llbracket 
r \rrbracket_{\phi}$.  Thus, algorithms and theories play the same
role with respect to the notion of model: like a theory, an algorithm
may or may not be valid in a model. This notion of validity of
an algorithm, like the notion of validity of a theory, can be used in
two ways: to study the algorithms or to define algebraic
structures as models of some algorithm.

When a class of algebraic structures --- such as the class of groups
or that of rings --- can be defined as the class of models of some
equational theory 
$T$ and this equational theory can be transformed into a
rewrite system $R$, we have the following equivalence
\begin{itemize}
\item $A$ is a member of the class (i.e. is a group, a ring, ...), 
\item $A$ is a model of the theory $T$, 
\item $A$ is a model of the algorithm $R$.
\end{itemize}
In this case, we say that the class of algebraic structures has a
{\em computational definition}. 

The goal of this note is to show that the class of vectorial spaces
has such a computational definition, {\em i.e.} that the axioms of vectorial
spaces can be oriented as a rewrite system. Moreover, the algorithm
obtained this way is a well-known algorithm in linear algebra: it is
an algorithm transforming any term expressing a vector into a linear
combination of the unknowns. This algorithm is also central to the
operational semantic of our functional programming language for quantum
computing {\tt Lineal} \cite{Helsinki}, because in such 
languages a program and its input value form a term expressing a
vector whose value, the output, is a linear combination of the base vectors. 
More generally, several algorithms used in linear algebra, such as matrix
multiplication algorithms, transform a term expressing a vector with
various constructs into a linear combination of base vectors.
This algorithm is valid in all
vectorial spaces and 
we show that it moreover completely defines the notion 
of vectorial space. 

The main difficulty to orient the theory of vectorial spaces is that
this theory has a sort for vectors and a sort for scalars and
that the scalars must form a field. The theory of fields is already
difficult to orient, because division is a partial 
operation. However, there are many fields, for instance the field 
${\mathbb Q}$ of rational numbers, whose addition and multiplication
can be presented by a terminating and ground confluent rewrite
system. Thus, we shall not consider an arbitrary vectorial space over
an arbitrary field. Instead, we consider a given field ${\cal K}$ defined
by a terminating and ground confluent rewrite system $S$ and focus on
${\cal K}$-vectorial spaces. Our rewrite system for vectors
will thus be parametrized by a rewrite system for scalars and
we will have to provide proofs of confluence and termination using 
minimal requirements on the scalar rewrite system. This leads to
a new method to prove the confluence of a rewrite system built as 
the union of two systems.

Moreover, this computational definition of the notion of vectorial
space can be extended to define other algebraic notions such as
bilinear operations.

\section{Rewrite systems}

\begin{definition} [Rewriting]
Let ${\cal L}$ be a first-order language 
and $R$ be a rewrite system on 
${\cal L}$. We say that a term $t$ 
{\em $R$-rewrites} in one step to a term $u$
if and only if  there is an occurrence $\alpha$ in the 
term $t$, a rewrite rule $l \longrightarrow r$ in $R$,
and a substitution $\sigma$ such that
$t_{|\alpha} = \sigma l$ and 
$u = t[\sigma  r]_{\alpha}$.
\end{definition}

\begin{definition} [Associative-Commutative Rewriting]
Let ${\cal L}$ be a first-order language containing binary 
function symbols $f_{1}, ..., f_{n}$ and $R$ be a rewrite system on 
${\cal L}$. We say that a term $t$ 
{\em $R/AC(f_{1}, ..., f_{n})$-rewrites} in one step to a term $u$
if and only if  there is a term $t'$, an occurrence $\alpha$ in the 
term $t'$, a rewrite rule $l \longrightarrow r$ in $R$,
and a substitution $\sigma$ such that
$t' =_{AC} t$, $t'_{|\alpha} = \sigma l$ and 
$u =_{AC} t'[\sigma  r]_{\alpha}$.
\end{definition}

\begin{remark}
This notion must be distinguished from that of {\em
R,AC-rewriting} \cite{PetersonStickel} where a term $t$ rewrites to a term $u$ 
only when it has a subterm AC-equivalent to an instance of the 
left hand side of a rewrite rule. For instance with the rule 
$x + x \longrightarrow 2 . x$ the term 
$t + (u + t)$ $R/AC$-rewrites 
to $2.t + u$ but is $R,AC$-normal. 
\end{remark}

\section{Models}

\begin{definition} [Algebra]
Let ${\cal L}$ be a first-order language.  An {\em ${\cal
L}$-algebra} is a family formed by a set $M$ and for each symbol
$f$ of ${\cal L}$ of arity $n$, a function $\hat{f}$ from 
$M^n$ to $M$. The denotation $\llbracket t \rrbracket_{\phi}$ of a term 
$t$ for an assignment $\phi$ 
mapping variables to elements of $M$
is defined as usual. 
\end{definition}

\begin{definition} [Model of a rewrite system]
Let ${\cal L}$ be a first-order language and $R$ an algorithm 
defined by a rewrite system on terms of the language ${\cal L}$. 
An {\em ${\cal L}$-algebra} ${\cal M}$ is a {\em model} of the
algorithm $R$, or the algorithm $R$ is 
{\em valid} in the model ${\cal M}$, (${\cal M} \models R$) if
for all rewrite rules 
$l \longrightarrow r$ of the rewrite system and valuations $\phi$, 
$\llbracket l \rrbracket_{\phi} =
\llbracket r \rrbracket_{\phi}$. 
\end{definition}

\begin{definition} [Model of an AC-rewrite system]
Let ${\cal L}$ be a first-order language containing binary 
function symbols $f_{1}, ..., f_{n}$, and $R$ an algorithm 
defined by an $AC(f_{1}, ..., f_{n})$-rewrite system on terms of the 
language ${\cal L}$. An {\em ${\cal L}$-algebra} ${\cal M}$ is a {\em model} 
of the algorithm $R$ (${\cal M} \models R$) if
\begin{itemize}
\item
for all rewrite rules 
$l \longrightarrow r$ of $R$ and valuations $\phi$, 
$\llbracket l \rrbracket_{\phi} =
\llbracket r \rrbracket_{\phi}$,
\item
for all valuations $\phi$ and indices $i$

$$\llbracket f_{i}(x, f_{i}(y,z)) \rrbracket_{\phi} = \llbracket 
f_{i}(f_{i}(x, y),  z) \rrbracket_{\phi}$$
$$\llbracket f_{i}(x, y) \rrbracket_{\phi} = \llbracket f_{i}(y,x) 
\rrbracket_{\phi}$$
\end{itemize}
\end{definition}

\begin{example}
Consider the language ${\cal L}$ formed by two binary symbols $+$ and
$\times$ and the algorithm $R$ defined by the rules

$$(x + y) \times z \longrightarrow (x \times z) + (y \times z)$$
$$x \times (y + z) \longrightarrow (x \times y) + (x \times z)$$
transforming for instance, the term $(a + a) \times a$ to 
the term $a \times a + a \times a$.
The structure $\langle \{0, 1\}, \mbox{\em min}, \mbox{\em
max} \rangle$
is a model of this algorithm. 
\end{example}

\begin{remark}
This definition of the validity of an algorithm in a model extends
some definitions of the semantics of a programming language where a
semantic is defined by a set $M$, a function $[~]$ mapping values of
the language to elements of $M$ and $n$-ary programs to functions from
$M^n$ to $M$, such that the program $P$ taking the values $v_{1}, ...,
v_{n}$ as input produces the value $w$ as output if and only if $[w] =
[P]([v_{1}], ..., [v_{n}])$.

Indeed, let us consider a programming language where the set of values is 
defined by a first-order language, whose symbols are called {\em
constructors}. Consider an extension of this language with a function 
symbol $p$ and possibly other function symbols. 
A program $P$ in this language is given by a terminating 
and confluent rewrite system on the extended language, such that for any 
$n$-uple of values $v_{1}, ..., v_{n}$
the program $P$ taking the values $v_{1}, ..., v_{n}$ as input produces 
the value $w$ as output if and only if 
the normal form of the term $p(v_{1}, ..., v_{n})$ is $w$.
Then, a model of this rewrite system is formed by
a set $M$,
for each constructor $c$ of arity $m$, a function $\hat{c}$ 
from $M^m$ to $M$,  a function $\hat{p}$ from $M^{n}$ to $M$, and possibly 
other functions, such that for all rules $l \longrightarrow r$ of the 
rewrite system and 
valuations $\phi$, 
$\llbracket l \rrbracket_{\phi} = \llbracket r \rrbracket_{\phi}$.

The denotations of the constructors define the function $[~]$ above
mapping values to 
elements of $M$ and the function $\hat{p}$ is the function $[P]$. 
For any $n$-uple of values $v_{1}, ..., v_{n}$, if the normal form 
of the term $p(v_{1}, ..., v_{n})$ is the value $w$ then 
$\llbracket w \rrbracket = \hat{p}(\llbracket v_{1} \rrbracket, ..., 
\llbracket v_{n} \rrbracket)$ and thus $[w] = [P]([v_{1}], ..., [v_{n}])$. 
\end{remark}

\section{Computing linear combinations of the unknowns}

\subsection{An algorithm}

Let ${\cal L}$ be a 2-sorted language with a sort $K$ for scalars and 
a sort $E$ for vectors containing 
two binary symbols $+$ and $\times$ of rank $\langle K, K, K \rangle$, 
two constants $0$ and $1$ of sort $K$,
a binary symbol, also written $+$, of rank
$\langle E, E, E \rangle$, a binary symbol $.$ of rank 
$\langle K, E, E \rangle$ and a constant ${\bf 0}$ of sort $E$. 

To transform a term of sort $E$ into a linear combination of the unknows, 
we want to develop sums of vectors 

$$\lambda . ({\bf u} + {\bf v}) \longrightarrow \lambda . {\bf u} + \lambda . {\bf v}$$
but factor sums of scalars and nested products

$$\lambda . {\bf u} + \mu . {\bf u} \longrightarrow (\lambda + \mu) . {\bf u}$$
$$\lambda . (\mu . {\bf u}) \longrightarrow (\lambda \times \mu) . {\bf u}$$
we also need the trivial rules

$${\bf u} + {\bf 0}  \longrightarrow {\bf u}$$
$$0 . {\bf u} \longrightarrow {\bf 0}$$
$$1 . {\bf u} \longrightarrow {\bf u}$$
and, finally, three more rules for confluence

$$\lambda . {\bf 0} \longrightarrow {\bf 0}$$
$$\lambda . {\bf u} + {\bf u} \longrightarrow (\lambda + 1) . {\bf u}$$
$${\bf u} + {\bf u} \longrightarrow (1 + 1) . {\bf u}$$

As we want to be able to apply the factorization rule to 
a term of the form $(3.{\bf x} + 4.{\bf y}) + 2.{\bf x}$, reductions in 
the above rewrite system must be defined modulo the associativity and 
commutativity of $+$. This leads to the following definition.

\begin{definition} [The rewrite system $R$]
The rewrite system $R$ is the AC(+)-rewrite system

$${\bf u} + {\bf 0}  \longrightarrow {\bf u}$$
$$0 . {\bf u} \longrightarrow {\bf 0}$$
$$1 . {\bf u} \longrightarrow {\bf u}$$
$$\lambda . {\bf 0} \longrightarrow {\bf 0}$$
$$\lambda . (\mu . {\bf u}) \longrightarrow (\lambda. \mu). {\bf u}$$
$$\lambda . {\bf u} + \mu . {\bf u} \longrightarrow (\lambda + \mu) . {\bf u}$$
$$\lambda . {\bf u} + {\bf u} \longrightarrow (\lambda + 1) . {\bf u}$$
$${\bf u} + {\bf u} \longrightarrow (1 + 1) . {\bf u}$$
$$\lambda . ({\bf u} + {\bf v}) \longrightarrow \lambda . {\bf u} + \lambda . {\bf v}$$
\end{definition}

\begin{definition} [Scalar rewrite system]
A {\em scalar rewrite system} is a 
rewrite system on a language containing at least the symbols $+$, 
$\times$, $0$ and $1$ such that:

\begin{itemize}
\item $S$ is terminating and ground confluent, 

\item for all closed terms
$\lambda$, $\mu$ and $\nu$, the pair of terms
\begin{itemize}
\item $0 + \lambda$ and $\lambda$,
\item $0 \times \lambda$ and $0$, 
\item $1 \times \lambda$ and $\lambda$, 
\item $\lambda \times (\mu + \nu)$ and $(\lambda \times \mu) + 
(\lambda \times \nu)$,
\item $(\lambda + \mu) + \nu$ and $\lambda + (\mu + \nu)$,
\item $\lambda + \mu$ and $\mu + \lambda$,
\item $(\lambda \times \mu) \times \nu$ and $\lambda \times (\mu \times \nu)$,
\item $\lambda \times \mu$ and $\mu \times \lambda$
\end{itemize}
have the same normal forms,

\item $0$ and $1$ are normal terms.
\end{itemize}
\end{definition}

We now want to prove that the for any scalar rewrite system $S$, the
system $R \cup S$ is terminating and confluent.

\subsection{Termination}

\begin{proposition}
\label{termR}
The system $R$ terminates.
\end{proposition}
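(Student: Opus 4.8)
The plan is to prove termination by exhibiting a monotone interpretation of vector terms into the well-founded set $(\mathbb{N},>)$ that is invariant under the associativity and commutativity of $+$ and that strictly decreases under every rule of $R$. Concretely, I would assign to every term $t$ of sort $E$ a positive integer $|t|$ defined by $|{\bf 0}| = 1$, $|{\bf x}| = 1$ for a vector variable ${\bf x}$, $|\lambda . t| = 2\,|t|$ and $|s + t| = |s| + |t| + 1$, the value being completely independent of the scalar subterms.

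First I would check that $|\cdot|$ is well defined and $AC(+)$-invariant: since the two sorts are disjoint, a scalar term never contains a vector subterm, so the recursion only descends through the two vector constructs $.$ and $+$ and simply discards the scalar argument of $.$; the identities $|(s+t)+w| = |s+(t+w)| = |s|+|t|+|w|+2$ and $|s+t|=|t+s|$ then give invariance under $AC(+)$. Next I would observe that $|\cdot|$ is strictly increasing in every argument of sort $E$, since doubling is strictly increasing and $|s+t|$ is strictly increasing in both $s$ and $t$. Because every redex of $R$ sits at a position of sort $E$, and every symbol lying above such a position on the path to the root (namely $+$, or the vector argument of $.$) is strictly monotone in the relevant argument, a single rewrite step strictly decreases $|\cdot|$ of the whole term; it therefore suffices to check the nine rules at top level.

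It then remains to verify the nine inequalities, writing $U=|{\bf u}|$ and $V=|{\bf v}|$, both $\ge 1$: the three delicate ones read $|\lambda . ({\bf u}+{\bf v})| = 2U+2V+2 > 2U+2V+1 = |\lambda.{\bf u}+\lambda.{\bf v}|$ for the distribution rule, $|{\bf u}+{\bf u}| = 2U+1 > 2U = |(1+1).{\bf u}|$, and $|\lambda.{\bf u}+{\bf u}| = 3U+1 > 2U = |(\lambda+1).{\bf u}|$, while the remaining six are immediate. The hard point, which dictates the whole construction, is exactly the interaction between the distribution rule and the three factoring rules: distribution duplicates $\lambda$ yet must still decrease, which forces $.$ to behave multiplicatively so that the single unit charged to the inner vector sum is amplified on the left-hand side; dually, the factoring rules combine a bare ${\bf u}$, whose implicit coefficient is $1$, with an explicit unit scalar, and any interpretation giving that scalar a weight multiplying $|{\bf u}|$ would make these rules increase. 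Both constraints are reconciled by letting $.$ double its vector argument irrespective of the scalar and by charging one unit per vector sum; this is legitimate precisely because, the sorts forbidding a vector inside a scalar and $R$ having no rule on scalars, no redex ever lies inside a scalar subterm, so the measure may ignore scalar values entirely. Since $|\cdot|$ takes values in the well-founded set $\mathbb{N}$ and strictly decreases at each step, $R$ terminates.
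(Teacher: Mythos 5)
Your proof is correct and takes essentially the same approach as the paper: an $AC(+)$-compatible interpretation into $(\mathbb{N},>)$ in which $.$ doubles the weight of its vector argument and $+$ adds a constant, so that every rule strictly decreases the measure (the paper uses $|{\bf u}+{\bf v}|=2+|{\bf u}|+|{\bf v}|$, $|\lambda.{\bf u}|=1+2|{\bf u}|$, $|{\bf 0}|=0$, differing from yours only in the choice of constants). Your explicit verification of monotonicity and $AC$-invariance fills in details the paper leaves implicit, but the argument is the same.
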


\begin{proof}
Consider the following interpretation (compatible with AC) 

$$|{\bf u} + {\bf v}| = 2 + |{\bf u}| + |{\bf v}|$$
$$|\lambda . {\bf u}| = 1 + 2 |{\bf u}|$$
$$|{\bf 0}| = 0$$

Each time a term ${\bf t}$ rewrites to a term ${\bf t'}$ we have
$|{\bf t}| > |{\bf t'}|$. Hence, the system terminates.
\end{proof}

\begin{proposition}
\label{termRUS}
For any scalar rewrite system $S$, the system $R \cup S$ terminates.
\end{proposition}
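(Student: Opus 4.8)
The plan is to combine the vector-sort measure $|\cdot|$ introduced in the proof of Proposition~\ref{termR} with the termination of $S$, exploiting the fact that this measure ignores scalars entirely. First I would isolate the two properties of $|\cdot|$ that do all the work. By Proposition~\ref{termR} the interpretation $|\cdot|$ is a strictly monotone, AC-compatible interpretation of vector terms into $\mathbb{N}$, so every $R/AC(+)$-step strictly decreases it: if $t \to_R t'$ then $|t| > |t'|$, and by monotonicity this holds whether the redex is at the root or at a subterm. The crucial second observation is that every $S$-step leaves $|\cdot|$ unchanged. Indeed, the defining clause $|\lambda . {\bf u}| = 1 + 2|{\bf u}|$ does not mention $\lambda$, and a scalar subterm occurs in a vector term only as the first argument of $.$ (or nested inside another scalar); hence $|t|$ depends solely on the vector skeleton of $t$ and not on any of its scalar subterms. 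Since an $S$-rule rewrites a scalar subterm to another scalar subterm without touching the vector skeleton, $t \to_S t'$ implies $|t| = |t'|$.

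With these two facts in hand, suppose for contradiction that there is an infinite $R \cup S$ reduction $t_0 \to t_1 \to t_2 \to \cdots$. The associated sequence of natural numbers $|t_0| \geq |t_1| \geq |t_2| \geq \cdots$ is non-increasing, hence eventually constant: there is an $N$ with $|t_n| = |t_{n+1}|$ for all $n \geq N$. By the first fact none of the steps $t_n \to t_{n+1}$ with $n \geq N$ can be an $R$-step, so the infinite tail $t_N \to t_{N+1} \to \cdots$ consists entirely of $S$-steps.

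It remains to derive a contradiction from this infinite $S$-reduction, and here I would use that $S$-steps never alter the vector skeleton. All of $t_N, t_{N+1}, \ldots$ therefore share one fixed skeleton $C$, each being obtained by plugging scalar terms into the finitely many scalar positions of $C$. An $S$-step acts inside one such position, and the positions are pairwise disjoint, so the infinite reduction decomposes into independent $S$-reductions of the individual scalar subterms; by the pigeonhole principle at least one scalar position must carry infinitely many steps, yielding an infinite $S$-reduction of a scalar term. This contradicts the termination of $S$ and completes the proof. (The degenerate case where $t_0$ is itself of sort $K$ needs no argument: no $R$-rule ever applies to it and termination is immediate from that of $S$.)

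I expect the last step to be the only delicate one: it is a modularity claim, lifting the termination of $S$ from the pure scalar language to scalar subterms embedded in the two-sorted language. What makes it legitimate is that $R$ and $S$ operate on disjoint kinds of redexes — $S$-redexes are purely scalar, while the vector skeleton is inert under $S$ — so the embedding creates no new scalar redexes and the decomposition into independent, pairwise-disjoint scalar positions is sound. The remaining verifications, namely that the interpretation is genuinely AC-compatible and strictly monotone in its vector arguments, are already supplied by Proposition~\ref{termR}.
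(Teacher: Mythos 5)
Your proof is correct and follows essentially the same route as the paper: the measure $|\cdot|$ from Proposition~\ref{termR} strictly decreases under $R$-steps and is invariant under $S$-steps, so an infinite $(R\cup S)$-reduction would have an infinite all-$S$ tail, contradicting the termination of $S$. The only difference is that you spell out the final modularity step (decomposing the $S$-tail into independent reductions of the disjoint maximal scalar subterms and applying pigeonhole), which the paper leaves implicit in the phrase ``as $S$ terminates, the sequence is finite''.
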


\begin{proof}
By definition of the function $|~|$, if a term ${\bf t}$ $S$-reduces to a term
${\bf t'}$ then $|{\bf t}| = |{\bf t'}|$. 
Consider a $(R \cup S)$-reduction sequence. At each $R$-reduction step, the
measure of the term strictly decreases and at each $S$-reduction step
it remains the same. Thus there are only a finite number of $R$-reduction 
steps in the sequence and, as $S$ terminates, the sequence is finite. 
\end{proof}

\subsection{Confluence}

\begin{definition} [The rewrite system $S_0$]
The system $S_0$ is formed by the rules

$$0 + \lambda \longrightarrow \lambda$$
$$0 \times \lambda \longrightarrow 0$$
$$1 \times \lambda \longrightarrow \lambda$$
$$\lambda \times (\mu + \nu) \longrightarrow (\lambda \times \mu) + (\lambda \times \nu)$$
where $+$ and $\times$ are AC symbols.
\end{definition}

\begin{proposition}
\label{termS0}
The rewrite system $S_0$ terminates.
\end{proposition}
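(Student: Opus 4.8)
The plan is to prove termination of $S_0$ by exhibiting a polynomial interpretation into the positive integers that strictly decreases along every rule while being monotone in each argument and compatible with the associativity and commutativity of $+$ and $\times$. Since $S_0$ is an AC-rewrite system, the interpretation I choose must assign equal values to AC-equivalent terms, so the functions interpreting $+$ and $\times$ must themselves be commutative and associative (or at least yield the same value under reassociation and reordering); this rules out using a raw term-size measure and forces a more careful choice.

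\medskip

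First I would fix the interpretation of the constants and the two binary symbols. The critical rule is the distributivity rule $\lambda \times (\mu + \nu) \longrightarrow (\lambda \times \mu) + (\lambda \times \nu)$, which duplicates $\lambda$, so the interpretation of $\times$ must grow fast enough in its first argument for the left side to dominate the right. The natural candidate is to interpret $\times$ by ordinary multiplication of the interpretations and to interpret $+$ by a sum-plus-constant of the form $|\mu| + |\nu| + c$ for a suitable constant, with $|0|$ and $|1|$ chosen as small positive integers. With multiplication distributing over this shifted sum, the distributivity rule contributes a strict decrease coming precisely from the duplicated additive constant, while the three rules $0 + \lambda \to \lambda$, $0 \times \lambda \to 0$ and $1 \times \lambda \to \lambda$ each decrease the measure by erasing a subterm. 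I would then verify, rule by rule, that $|l| > |r|$ holds for all assignments of positive-integer values to the variables.

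\medskip

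Next I would check AC-compatibility: that both interpreting functions are commutative and associative as integer operations, so that $t =_{AC} t'$ implies $|t| = |t'|$, and that each is strictly monotone in each argument, so that a decrease inside a context propagates to a decrease of the whole term. Ordinary integer addition and multiplication are associative and commutative, so the only delicate point is whether the shifted-sum interpretation of $+$ remains associative; choosing the additive constant correctly (or absorbing it into a linear reparametrisation) settles this. Once strict decrease on every rule, monotonicity, and AC-invariance are all in place, a single application of the standard well-foundedness argument for polynomial interpretations gives termination.

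\medskip

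The main obstacle I anticipate is reconciling two competing demands on the interpretation of $+$: AC-compatibility pushes toward a plain associative-commutative operation, whereas obtaining a strict decrease on the distributivity rule (where $+$ appears on both sides but $\times$ is pushed inward) requires the interpretation to create enough slack that the duplication of $\lambda$ on the right is outweighed. Getting these to coexist — that is, finding concrete numerical values for the constants $|0|$, $|1|$ and the additive shift that make all four inequalities strict simultaneously while keeping both operations genuinely associative — is the one place where a careless choice fails, and it is where I would spend the bulk of the effort.
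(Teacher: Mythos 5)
Your proposal is correct and follows essentially the same route as the paper, which takes exactly the interpretation you describe: $\|\lambda \times \mu\| = \|\lambda\|\,\|\mu\|$, $\|\lambda + \mu\| = \|\lambda\| + \|\mu\| + 1$, and $\|0\| = \|1\| = 2$, the last choice guaranteeing that every term is worth at least $2$ so that the distributivity rule strictly decreases (the left side carries $\|\lambda\| \cdot 1$ where the right side carries only $1$). The shifted sum is indeed associative, so the AC-compatibility concern you raise resolves exactly as you anticipate.
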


\begin{proof}
Consider the following interpretation (compatible with AC) 

$$||\lambda + \mu|| = ||\lambda|| + ||\mu|| + 1$$
$$||\lambda \times \mu|| = ||\lambda|| ||\mu||$$
$$||0|| = ||1|| = 2$$

Notice that all terms are worth at least $2$ and thus that 
each time a term $t$ rewrites to a term $t'$ we have
$||t|| > ||t'||$. Hence, the system terminates.
\end{proof} 

\begin{proposition}
\label{termRUS0}
The system $R \cup S_0$ terminates.
\end{proposition}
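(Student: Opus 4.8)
The plan is to prove that $R \cup S_0$ terminates by combining the two termination results already established, using the same measure $|\cdot|$ that worked for $R$ alone. First I would observe that $S_0$ operates only on scalars, i.e.\ on subterms of sort $K$, while the measure $|\cdot|$ assigns values to terms of sort $E$ and treats scalar subterms $\lambda$ opaquely (the scalar $\lambda$ in $|\lambda . {\bf u}| = 1 + 2|{\bf u}|$ does not contribute to the measure at all). Consequently, just as in Proposition~\ref{termRUS}, any $S_0$-reduction step leaves $|{\bf t}|$ unchanged, whereas every $R$-reduction step strictly decreases it. This already bounds the number of $R$-steps in any $R \cup S_0$ reduction sequence.

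The key steps, in order, are as follows. I would first note that $S_0$ is a subsystem of an arbitrary scalar rewrite system's concern, but more directly: $S_0$ terminates on its own by Proposition~\ref{termS0}. Then, in any infinite $R \cup S_0$ sequence, there can be only finitely many $R$-steps (since $|\cdot|$ is a measure into the well-founded order on natural numbers that strictly drops at each $R$-step and is invariant under $S_0$-steps). Hence some infinite tail of the sequence consists purely of $S_0$-steps, contradicting the termination of $S_0$ from Proposition~\ref{termS0}. This yields termination of $R \cup S_0$.

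The one point requiring care — and the main obstacle — is verifying that $S_0$-steps genuinely preserve the measure $|\cdot|$. The measure is defined by recursion on the structure of $E$-sorted terms and does not look inside $K$-sorted subterms, so rewriting a scalar subterm cannot alter the value of the enclosing vector term. I would make this precise by checking that each $S_0$ rule rewrites a scalar term to a scalar term (both sides of sort $K$), so that when such a rule fires inside a larger term of sort $E$, it only modifies a position occupied by a scalar, which $|\cdot|$ ignores. I must also confirm that the measure is well-defined modulo $AC(+)$ on vectors and that $S_0$'s $AC$ symbols (the scalar $+$ and $\times$) do not interfere, which is immediate since those AC symbols live on the scalar sort.

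Note that this argument is essentially a specialization of Proposition~\ref{termRUS}: since $S_0$ satisfies the termination requirement of a scalar rewrite system (Proposition~\ref{termS0}) and its rules are scalar equations preserving $|\cdot|$, the proof of Proposition~\ref{termRUS} applies with $S := S_0$. I would therefore present the result as a direct instance of that proposition, remarking only that $S_0$ is terminating and acts on scalars, so that the earlier reasoning carries over verbatim.
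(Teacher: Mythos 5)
Your proof is correct and follows essentially the same route as the paper's: $S_0$-steps preserve the measure $|\cdot|$ because it ignores scalar subterms, $R$-steps strictly decrease it, so any reduction sequence has finitely many $R$-steps and an infinite tail would be a pure $S_0$-sequence, contradicting Proposition~\ref{termS0}. Your closing observation that only termination of $S_0$ (not the full scalar-rewrite-system conditions) is needed to reuse the argument of Proposition~\ref{termRUS} is a correct and worthwhile clarification, since $S_0$ is not itself claimed to be a scalar rewrite system.
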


\begin{proof}
By definition of the function $|~|$, if a term ${\bf t}$
$S_0$-reduces to a term 
${\bf t'}$ then $|{\bf t}| = |{\bf t'}|$. 
Consider a $(R \cup S_0)$-reduction sequence. At each $R$-reduction step, the
measure of the term strictly decreases and at each $S_0$-reduction step,
it remains the same. Thus there are only a finite number of $R$-reduction 
steps in the sequence and, as $S_0$ terminates, by Proposition
\ref{termS0}, the sequence is finite. 
\end{proof}

\begin{proposition}
\label{confRUS0}
The rewrite system $R \cup S_0$ is confluent.
\end{proposition}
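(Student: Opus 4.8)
The plan is to prove confluence of $R \cup S_0$ via the standard route: since we have already established termination (Proposition~\ref{termRUS0}), by Newman's lemma it suffices to prove \emph{local} confluence. Because both $+$ (on vectors) and $+, \times$ (on scalars) are associative-commutative, this is confluence modulo AC, so the appropriate tool is the critical-pair criterion for AC-rewriting: I would compute all critical pairs, including those arising from AC-unification (overlaps modulo AC), and check that each one is joinable. The reason for isolating the subsystem $S_0$ rather than working with an arbitrary scalar system $S$ is precisely that $S_0$ is a fixed, finite, known system, so its critical pairs are a concrete finite set one can enumerate, whereas an arbitrary $S$ is only constrained by the equational properties in the definition of a scalar rewrite system.

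First I would organize the critical pairs into three groups according to which rules overlap. The pairs internal to $S_0$ (for instance, overlapping $0 \times \lambda \longrightarrow 0$ with the distributivity rule, or distributivity with itself modulo AC) must be checked to join; these should close using the other $S_0$ rules together with AC. The pairs internal to $R$ form the bulk of the work: the factorization rules $\lambda . \mathbf{u} + \mu . \mathbf{u} \longrightarrow (\lambda + \mu).\mathbf{u}$, $\lambda.\mathbf{u} + \mathbf{u} \longrightarrow (\lambda+1).\mathbf{u}$, and $\mathbf{u} + \mathbf{u} \longrightarrow (1+1).\mathbf{u}$ overlap heavily with one another modulo AC (e.g. in a term like $\lambda.\mathbf{u} + \mathbf{u} + \mathbf{u}$ or $\lambda.\mathbf{u} + \mu.\mathbf{u} + \mathbf{u}$), and they also overlap the distributivity rule $\lambda.(\mathbf{u}+\mathbf{v}) \longrightarrow \lambda.\mathbf{u} + \lambda.\mathbf{v}$, the product-nesting rule, and the trivial rules $0.\mathbf{u} \longrightarrow \mathbf{0}$, $1.\mathbf{u} \longrightarrow \mathbf{u}$, $\mathbf{u}+\mathbf{0} \longrightarrow \mathbf{u}$. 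Each such overlap produces two reducts which I would rewrite to a common form; crucially, the scalar coefficients that appear must be reconciled, and this is where the hypotheses on $S_0$ enter.

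The third group is the genuinely interesting one: the \emph{mixed} critical pairs, where an $R$-rule and an $S_0$-rule interact. These do not arise from a syntactic overlap of left-hand sides in the usual sense, but from the fact that an $R$-rule can create a scalar subterm that an $S_0$-rule then rewrites, and the joinability of the two reducts depends on scalar identities. For example, after a factorization step producing $(\lambda+\mu).\mathbf{u}$, the coefficient $\lambda+\mu$ may be $S_0$-reducible, and I must check that the result agrees regardless of the order in which factorization and scalar simplification are applied. Here the defining equalities of a scalar rewrite system --- that $0+\lambda$, $1\times\lambda$, distributivity, and the associativity/commutativity of $+$ and $\times$ all preserve normal forms --- are exactly what guarantee the diamonds close. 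Concretely, joining these pairs reduces to showing that certain scalar expressions, such as $(\lambda+\mu)+1$ versus $\lambda+(\mu+1)$ or $(1+1)$ versus $1+1$, have equal normal forms in $S_0$, which follows from the stipulated properties.

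\textbf{The main obstacle} I anticipate is the sheer combinatorial volume of critical pairs generated modulo AC by the three factorization rules together with distributivity and the product rule, and in particular keeping track of the AC-overlaps where the same vector $\mathbf{u}$ occurs in three or more summands with different scalar prefixes. The subtle joinability cases are those where one reduct carries a coefficient like $\lambda + 1$ and the other carries $1 + \lambda$ or $(\lambda+\mu)+\nu$ against $\lambda+(\mu+\nu)$: these close only because $S_0$ simulates the associativity and commutativity of scalar addition on normal forms. I would therefore present the scalar-identity verifications as a short lemma invoking the definition of a scalar rewrite system, and then tabulate the vector-level critical pairs, treating the routine joins briefly and spelling out only the overlaps among the factorization rules and their interaction with distributivity in detail.
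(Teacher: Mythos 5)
Your approach is exactly the one the paper takes: termination (Proposition~\ref{termRUS0}) plus Newman's lemma reduces the problem to joinability of all critical pairs modulo AC, which the paper simply checks mechanically with the tool CiME rather than by hand. Your organization is sound, though note that since the left-hand sides of $R$ and $S_0$ have no proper overlaps (the only scalar subterms in $R$'s left-hand sides are variables and the normal constants $0$ and $1$), your third group of ``mixed'' critical pairs is in fact empty and those peaks commute for free.
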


\begin{proof}
As the system terminates by Proposition \ref{termRUS0}, 
it is sufficient to prove the all critical 
pair close. This can be mechanically checked, for instance using the system  
CIME\footnote{\tt http://cime.lri.fr/}. 
\end{proof} 

\begin{definition}[Subsumption]
A terminating and confluent relation $S$ {\em subsumes} a relation
$S_0$ if whenever $t~S_0~u$, $t$ and $u$ have the same $S$-normal
form.
\end{definition}

\begin{definition}[Commutation]
The relation $R$ {\em commutes} with the relation $R'$, if 
whenever 
${\bf t}~R~{\bf u}_{1}$ and 
${\bf t}~R'~{\bf u}_{2}$, there exists a term 
${\bf w}$ such that ${\bf u}_{1}~R'~{\bf w}$ and
${\bf u}_{2}~R~{\bf w}$.
\end{definition}

\begin{proposition}
\label{comm}
Let $S$ be a scalar rewrite system, then $R$ commutes with the 
reflexive-transitive closure $S^*$ of $S$.
\end{proposition}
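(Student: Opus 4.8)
The plan is to reduce the commutation of a single $R$-step with the whole closure $S^{*}$ to a local statement about one $R$-step against one $S$-step, and then to lift that statement to $S^{*}$ by induction on the length of the scalar reduction, in the style of the usual strip lemma. So I would first fix a peak ${\bf u}_{1} \leftarrow_{R} {\bf t} \to_{S} {\bf u}_{2}$, where the $R$-step uses a rule $l \to r$ at an occurrence $\alpha$ (matched modulo the associativity and commutativity of the vector symbol $+$) and the $S$-step acts at an occurrence $\beta$. The observation that organises the whole argument is that $S$ rewrites only scalar subterms, whereas in $R$ the scalars occur exclusively as the first argument of the symbol $.$; moreover every scalar position appearing in a left-hand side of $R$ is either one of the variables $\lambda,\mu$ or one of the constants $0,1$, and the latter are $S$-normal by hypothesis. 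Hence $\beta$ can never fall on the matched part of $l$ except inside the instance $\sigma x$ of one of its variables, which is exactly the benign ``variable overlap'' situation.

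I would then dispatch the routine cases. When $\alpha$ and $\beta$ are disjoint the two redexes are parallel and the diagram closes at once by performing each step after the other. When $\beta$ lies inside the instance of a scalar variable of a rule that does not duplicate it --- the rules $\lambda.(\mu.{\bf u})\to(\lambda\times\mu).{\bf u}$, $\lambda.{\bf u}+\mu.{\bf u}\to(\lambda+\mu).{\bf u}$, $\lambda.{\bf u}+{\bf u}\to(\lambda+1).{\bf u}$, read on the $\lambda$ side --- the same scalar reappears, possibly combined by $+$ or $\times$, inside the right-hand side, and I close by replaying that one $S$-step there together with the single $R$-step from ${\bf u}_{2}$. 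The erasing rules $0.{\bf u}\to{\bf 0}$ and $\lambda.{\bf 0}\to{\bf 0}$ simply delete the redex, so ${\bf u}_{1}$ contains no residual and the single $R$-step from ${\bf u}_{2}$ closes the diagram with the $S^{*}$-side empty. The rule $\lambda.({\bf u}+{\bf v})\to\lambda.{\bf u}+\lambda.{\bf v}$ copies $\lambda$ twice, so if $\beta$ is inside $\lambda$ I replay the $S$-step on both copies, closing with two $S$-steps on the ${\bf u}_{1}$ side and one $R$-step on the other. I would also record that the two associative-commutative layers do not interfere: the matching of $R$ is modulo the vector $+$, while $S$ is associative-commutative only for the scalar $+$ and $\times$, and since these are symbols of different sorts the positional classification above is stable under any AC-rearrangement of vector sums.

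The hard part, and the reason ground confluence of $S$ (and not merely its termination) is assumed, is the interaction of $S$ with the non-left-linear rules $\lambda.{\bf u}+\mu.{\bf u}\to(\lambda+\mu).{\bf u}$, $\lambda.{\bf u}+{\bf u}\to(\lambda+1).{\bf u}$ and ${\bf u}+{\bf u}\to(1+1).{\bf u}$, when $\beta$ sits inside a scalar buried in the duplicated vector ${\bf u}$. A single $S$-step rewrites only one of the two syntactic copies of ${\bf u}$ present in ${\bf t}$, so in ${\bf u}_{2}$ the two summands cease to coincide and the factorisation rule no longer fires: the naive one-step diagram does not close. The two summands are, however, $S$-reducts of one and the same term, so by the ground confluence of $S$ they admit a common reduct; I would bring the lagging copy down to it, re-synchronising the two occurrences, after which the factorisation rule applies and produces exactly the term reached from ${\bf u}_{1}$ by reducing its single surviving copy in the same way. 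This re-synchronisation is the delicate point: it is what forces the confluence hypothesis on the scalar system, it is why a purely termination-based, one-step-at-a-time commutation argument is insufficient, and reconciling the auxiliary scalar reductions it introduces with the one-$R$-step formulation of the closure is exactly the heart of the ``new method'' for the union $R\cup S$ announced in the introduction.

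Finally I would assemble the pieces: having closed every local peak, I lift to $S^{*}$ by induction on the number of $S$-steps from ${\bf t}$ to ${\bf u}_{2}$, tiling the rectangle one $S$-step at a time and propagating the $R$-step along the bottom edge, absorbing any auxiliary scalar reductions produced by the re-synchronisation into the $S^{*}$-closure, which yields the commutation of $R$ with $S^{*}$ as stated.
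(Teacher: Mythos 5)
Your positional analysis and your routine cases reproduce, in expanded form, what the paper's entire proof consists of: the observation that every scalar subterm of a left-hand side of $R$ is either a variable or one of the constants $0$, $1$, which are $S$-normal, so that an $S$-redex can only sit inside the instance of a variable of the rule, followed by a rule-by-rule check of the resulting variable-overlap diagrams. Up to that point you are on the paper's track, and your handling of the disjoint, erasing and $\lambda$-duplicating cases is correct.

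The difficulty you isolate in the non-left-linear rules is real, but your re-synchronisation step does not resolve it, and this is a genuine gap. The definition of commutation used here demands, from ${\bf t}~R~{\bf u}_{1}$ and ${\bf t}~S^{*}~{\bf u}_{2}$, a term ${\bf w}$ with ${\bf u}_{1}~S^{*}~{\bf w}$ and ${\bf u}_{2}~R~{\bf w}$: a \emph{single} $R$-step from ${\bf u}_{2}$ with nothing before it. Your re-synchronising scalar reductions are applied to ${\bf u}_{2}$ \emph{before} the factorisation rule can fire; they cannot be ``absorbed into the $S^{*}$-closure'', because that closure lives on the other side of the square, between ${\bf u}_{1}$ and ${\bf w}$. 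What your local diagram establishes is only the inclusion ${\leftarrow_{R}}\cdot{\rightarrow_{S}}\subseteq{\rightarrow_{S^{*}}}\cdot{\leftarrow_{R}}\cdot{\leftarrow_{S^{*}}}$, and such tiles do not assemble into the stated commutation. Indeed the proposition in its full generality fails: take ${\bf t}=a.(c.{\bf x})+b.(c.{\bf x})$ with $a$, $b$, $c$ closed, none of them $0$ or $1$, and $c$ not $S$-normal; rewriting one copy of $c$ gives a ${\bf u}_{2}$ all of whose one-step $R$-reducts are sums, whereas every $S^{*}$-reduct of ${\bf u}_{1}=(a+b).(c.{\bf x})$ is a product, so no ${\bf w}$ exists. (The paper's own one-line proof glosses over exactly this point.) What actually saves the development is that the Key Lemma invokes commutation only for ${\bf u}_{2}={\bf t}{\downarrow}$, the $S$-normal form of ${\bf t}$: there both copies of the duplicated subterm are replaced by the same normal form, the redex survives, and one $R$-step closes the square, with the uniqueness of $S$-normal forms supplying the ${\bf u}_{1}~S^{*}~{\bf w}$ leg. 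A correct write-up should either restrict the statement to that instance, or prove your weaker diagram and observe that it still suffices for the Key Lemma because ${\bf t}{\downarrow}$ admits no further $S$-steps.
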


\begin{proof}
We check this for each rule of $R$, using the fact that 
in the left member of a rule, each subterms of sort scalar is 
either a variables or $0$ or $1$, which are normal forms.
\end{proof}

\begin{proposition}[Key Lemma]
\label{Lemma}
Let $R$, $S$ and $S_0$ be three relations defined on a set such that
$S$ is terminating and confluent, 
$R \cup S$ terminates, 
$R \cup S_0$ is confluent, 
$S$
subsumes $S_0$ ans the relation $R$ commutes with 
$S^{*}$.
Then, the relation $R \cup S$ is confluent.
\end{proposition}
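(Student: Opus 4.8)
The plan is to invoke Newman's Lemma. Since $R \cup S$ terminates by hypothesis, it is confluent as soon as it is locally confluent, so I would only need to close every local peak made of two $R \cup S$ steps. There are three cases according to the labels of the two steps. If both steps are $S$-steps, the peak closes by confluence of $S$. If one is an $R$-step and the other an $S$-step, say $t \rightarrow_R u$ and $t \rightarrow_S v$, then applying the commutation of $R$ with $S^{*}$ to the $R$-step $t \rightarrow_R u$ and the length-one $S^{*}$-step $t \rightarrow_S v$ yields a term $w$ with $u \rightarrow^{*}_S w$ and $v \rightarrow_R w$, which closes the peak inside $R \cup S$. The remaining and genuinely hard case is the $R/R$ peak, since $R$ by itself is not assumed confluent; here I would have to bring in the confluence of $R \cup S_0$ together with subsumption.

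The device for the $R/R$ case is to project everything onto $S$-normal forms. Because $S$ is terminating and confluent, each term $t$ has a unique $S$-normal form, which I write $\bar t$. I would first record two facts. \emph{(i)} An $S_0$-step does not change the $S$-normal form: if $t~S_0~u$ then $\bar t = \bar u$, which is exactly subsumption. \emph{(ii)} An $R$-step is mirrored on normal forms: if $t \rightarrow_R u$, then from $t \rightarrow^{*}_S \bar t$ and the commutation of $R$ with $S^{*}$ I obtain a term $w$ with $\bar t \rightarrow_R w$ and $u \rightarrow^{*}_S w$; since $u \rightarrow^{*}_S w$ forces $\bar w = \bar u$, this gives $\bar t \rightarrow_R w \rightarrow^{*}_S \bar u$. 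In particular every mirrored step is itself an $R \cup S$ reduction $\bar t \rightarrow^{*}_{R \cup S} \bar u$.

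Combining \emph{(i)} and \emph{(ii)} step by step along an arbitrary reduction shows that $t \rightarrow^{*}_{R \cup S_0} u$ implies $\bar t \rightarrow^{*}_{R \cup S} \bar u$: each $S_0$-step leaves the normal form fixed and each $R$-step is replaced by its mirror. Now for an $R/R$ peak with $t \rightarrow_R u$ and $t \rightarrow_R v$ I would use confluence of $R \cup S_0$ to obtain a common reduct $w$ with $u \rightarrow^{*}_{R \cup S_0} w$ and $v \rightarrow^{*}_{R \cup S_0} w$. Projecting gives $\bar u \rightarrow^{*}_{R \cup S} \bar w$ and $\bar v \rightarrow^{*}_{R \cup S} \bar w$; adjoining the $S$-reductions $u \rightarrow^{*}_S \bar u$ and $v \rightarrow^{*}_S \bar v$ then yields $u \rightarrow^{*}_{R \cup S} \bar w$ and $v \rightarrow^{*}_{R \cup S} \bar w$, so the peak closes. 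With all three cases settled, local confluence holds and Newman's Lemma delivers confluence of $R \cup S$.

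I expect the whole weight of the argument to fall on fact \emph{(ii)} and its iteration. The point that makes the proof go through is that commutation of $R$ with $S^{*}$ guarantees an $R$-step survives $S$-normalization as (essentially) a single $R$-step, so that an $R \cup S_0$ derivation, whose $S_0$-part is dissolved by subsumption, can be transported into a genuine $R \cup S$ derivation between $S$-normal forms. Verifying that this transport respects composition of steps, and taking care to apply commutation in the correct orientation, is where the attention is needed; the three peak cases themselves are then routine.
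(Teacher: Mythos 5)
Your proof is correct. It shares the paper's central device---using the commutation of $R$ with $S^{*}$ to mirror an $R$-step on $S$-normal forms, and subsumption to dissolve $S_0$-steps---but the overall architecture is genuinely different. The paper introduces the auxiliary relation $R;S^{\downarrow}$ (an $R$-step followed by $S$-normalization), proves that it is locally confluent and terminating, hence confluent by Newman's Lemma, and then transfers confluence back to $R \cup S$ by projecting arbitrary $(R \cup S)^{*}$-peaks onto $(R;S^{\downarrow})^{*}$-peaks between normal forms. You instead apply Newman's Lemma directly to $R \cup S$, which terminates by hypothesis, and close local peaks by a case analysis on the labels of the two steps: the $S/S$ case by confluence of $S$, the $R/S$ case by a single application of commutation (correctly oriented, since a one-step $S$-reduction is an $S^{*}$-reduction), and the $R/R$ case by the projection argument, used only there to transport the $(R \cup S_0)$-valley supplied by the confluence hypothesis into an $(R \cup S)$-valley ending at $\bar{w}$. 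Your route avoids naming and reasoning about the auxiliary relation and localizes the use of the $R \cup S_0$ hypothesis to the one kind of peak that needs it, which makes the role of each hypothesis more transparent; the paper's route treats all peaks uniformly and yields the confluence of $R;S^{\downarrow}$ on $S$-normal forms as a by-product. Both arguments consume exactly the same hypotheses, including the termination of $R \cup S$ (you for Newman on $R \cup S$, the paper for the termination of $R;S^{\downarrow}$).
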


\begin{proof}
We write ${\bf t}{\downarrow}$ for the $S$-normal form of ${\bf t}$.
We define the relation $S^{\downarrow}$ by ${\bf
t}~S^{\downarrow}~{\bf u}$ if ${\bf u}$ is the $S$-normal form of
${\bf t}$ and the relation $R;S^{\downarrow}$ by ${\bf
t}~(R;S^{\downarrow})~{\bf u}$ if there exists a term ${\bf v}$ such
that ${\bf t}~{R}~{\bf v}~S^{\downarrow}~{\bf u}$.

First notice that, if ${\bf t}~R~{\bf u}$ then 
${\bf t} {\downarrow}~(R;S^{\downarrow})~{\bf u} {\downarrow}$ using
the commutation of $R$ and $S^{*}$ and the unicity of $S$-normal forms.
Thus 
if ${\bf t}~(R \cup S)^*~{\bf u}$ then 
${\bf t} {\downarrow}~(R;S^{\downarrow})^*~{\bf u} {\downarrow}$
simulating each $R$-reduction step by a $(R;S^{\downarrow})$-reduction
step on normal forms.
In a similar way, 
if ${\bf t}~(R \cup S_0)^*~{\bf u}$ then 
${\bf t}{\downarrow}~(R;S^{\downarrow})^*~{\bf u}{\downarrow}$, 
simulating each $R$-reduction step by a $(R;S^{\downarrow})$-reduction
step on normal forms and using the subsumption of $S_0$ by $S$ for
$S_0$-steps.

We then check that $R;S^{\downarrow}$ is locally 
confluent.
If ${\bf t}~(R;S^{\downarrow})~{\bf v}_{1}$ 
and ${\bf t}~(R;S^{\downarrow})~{\bf v}_{2}$ then there exist terms 
${\bf u}_{1}$ and 
${\bf u}_{2}$ such that 
${\bf t}~R~{\bf u}_{1}~S^{\downarrow}~{\bf v}_{1}$
and ${\bf t}~R~{\bf u}_{2}~S^{\downarrow}~{\bf v}_{2}$.
Thus, by confluence, of $R \cup S_0$, there exists 
a term ${\bf w}$ such that ${\bf u}_{1}~(R \cup S_0)^*~{\bf w}$ and 
${\bf u}_{2}~(R \cup S_0)^*~{\bf w}$. 
Thus 
${\bf u}_{1} {\downarrow}~(R;S^{\downarrow})^*~{\bf w} {\downarrow}$ and
${\bf u}_{2} {\downarrow}~(R;S^{\downarrow})^*~{\bf w} {\downarrow}$
i.e.
${\bf v}_{1}~(R;S^{\downarrow})^*~{\bf w} {\downarrow}$ and
${\bf v}_{2}~(R;S^{\downarrow})^*~{\bf w} {\downarrow}$. 

As the relation $R;S^{\downarrow}$ is locally confluent and terminating, 
it is confluent.  

Finally, if we have ${\bf t}~(R \cup S)^*~{\bf u}_{1}$ and 
${\bf t}~(R \cup S)^*~{\bf u}_{2}$ then 
we have 
${\bf t} {\downarrow}~(R;S^{\downarrow})^*~{\bf u}_{1} {\downarrow}$ and 
${\bf t} {\downarrow}~(R;S^{\downarrow})^*~{\bf u}_{2} {\downarrow}$. 
Thus, there exists a term ${\bf w}$ such that 
${\bf u}_{1} {\downarrow}~(R;S^{\downarrow})^*~{\bf w}$ and 
and 
${\bf u}_{2} {\downarrow}~(R;S^{\downarrow})^*~{\bf w}$.
Thus ${\bf u}_{1}~(R \cup S)^*~{\bf w}$ and 
${\bf u}_{2}~(R \cup S)^*~{\bf w}$.
\end{proof} 

\begin{proposition}
\label{confRUS}
Let $S$ be a scalar rewrite system.
The rewrite system $R \cup S$ is confluent on terms containing 
variables of sort $E$ but no variables of sort $K$.
\end{proposition}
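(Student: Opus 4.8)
The plan is to derive this from the Key Lemma (Proposition \ref{Lemma}), instantiated with the concrete relations $R$, $S$ and $S_0$, taking as the ambient ``set'' of that lemma the collection of terms that contain variables of sort $E$ but none of sort $K$. The first thing I would check is that this collection is closed under $R$-, $S$- and $S_0$-reduction: each rule either acts purely on scalar subterms or builds new scalar subterms only from $0$, $1$, $+$ and $\times$ applied to already-present closed scalars, so no $K$-variable is ever created. Working inside this closed set, three of the five hypotheses of the Key Lemma are already available off the shelf: termination of $R \cup S$ is Proposition \ref{termRUS}, confluence of $R \cup S_0$ is Proposition \ref{confRUS0}, and commutation of $R$ with $S^*$ is Proposition \ref{comm}. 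It then remains only to supply the confluence of $S$ and the subsumption of $S_0$ by $S$.

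The confluence of $S$ is exactly where the restriction to terms without $K$-variables does its work. A scalar rewrite system is only assumed \emph{ground} confluent, so $S$ need not be confluent on open scalar terms; but in our setting every scalar subterm is closed and stays closed under reduction, so all scalar redexes are ground and ground confluence of $S$ furnishes precisely the confluence the Key Lemma demands. Termination of $S$ is immediate from its definition.

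For the subsumption I would read off the defining clauses of a scalar rewrite system. The four rules of $S_0$, together with the associativity and commutativity of $+$ and $\times$, are exactly the pairs of closed scalar terms that the definition declares to have the same $S$-normal form. Hence if $t~S_0~u$, obtained by applying one of these rules modulo AC at some occurrence, the matched and contracted subterms are closed instances of one such pair and therefore share an $S$-normal form; propagating this equality through the surrounding context, using that $S$-reduction is closed under contexts and that $S$ is ground confluent on the closed scalar subterms, shows that $t$ and $u$ themselves have the same $S$-normal form, which is the definition of subsumption. With all five hypotheses established, Proposition \ref{Lemma} delivers the confluence of $R \cup S$ on the terms in question.

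The step I expect to demand the most care is this subsumption argument: one must verify that the AC-rewriting convention governing $S_0$ and the AC-identification built into $S$ cooperate, so that a single $S_0$-step performed inside an arbitrary context preserves $S$-normal forms, and one must make genuine use of the closedness of scalar subterms to remain within the ground-confluence regime of $S$ throughout the propagation through contexts.
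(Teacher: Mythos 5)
Your proposal is correct and follows essentially the same route as the paper: the paper's proof likewise applies the Key Lemma on the set of semi-open terms, obtains confluence and termination of $S$ there from ground confluence (since all scalar subterms are closed), cites Propositions \ref{termRUS}, \ref{confRUS0} and \ref{comm} for three hypotheses, and gets subsumption of $S_0$ directly from the definition of a scalar rewrite system. Your write-up merely spells out the closure and subsumption details that the paper leaves implicit.
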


\begin{proof}
We use the Key Lemma on the set of semi-open terms, i.e. terms 
with variables of sort $E$ but no variables of sort $K$. As $S$ is 
ground confluent and 
terminating, it is confluent and terminating on semi-open terms, 
by Proposition \ref {termRUS}, the system $R \cup S$ terminates, 
by Proposition \ref{confRUS0}, the system $R \cup S_0$ is confluent, 
the system $S$ subsumes $S_0$ because $S$ is a scalar rewrite
system,
and by Proposition \ref{comm}, the system $R$ commutes with $S^*$.
\end{proof} 

\begin{remark}
Confluence on semi-open terms implies ground confluence in 
any extension of the language with constants for vectors, typically base
vectors.
\end{remark}

\subsection{Normal forms}

\begin{proposition}
\label{classification}
Let ${\bf t}$ be a normal term whose variables are among 
${\bf x}_{1}, ..., {\bf x}_{n}$. 
The term ${\bf t}$ is ${\bf 0}$ or a term of the form 
$\lambda_{1} . {\bf x}_{i_{1}} + ... +
\lambda_{k} . {\bf x}_{i_{k}} + 
{\bf x}_{i_{k+1}} + ... +
{\bf x}_{i_{k+l}}$
where the indices $i_{1}, ..., i_{k+l}$ are distinct
and $\lambda_{1}, ..., \lambda_{k}$ are neither $0$ nor $1$.

\end{proposition}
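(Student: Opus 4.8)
The plan is to analyze the top-level structure of ${\bf t}$ modulo AC. Since the symbol $+$ of rank $\langle E, E, E \rangle$ is associative-commutative, I would first flatten the outermost sum and view ${\bf t}$, modulo $=_{AC}$, as a finite multiset of \emph{atoms} $a_{1} + \cdots + a_{m}$, where an atom is any summand that is not itself of the form ${\bf u} + {\bf v}$. Each atom is thus ${\bf 0}$, a variable ${\bf x}_{i}$, or a product $\lambda . {\bf u}$, and each is a subterm of ${\bf t}$, hence itself normal. I would first dispose of ${\bf 0}$: if some atom is ${\bf 0}$ and $m \geq 2$, then modulo AC the rule ${\bf u} + {\bf 0} \longrightarrow {\bf u}$ applies, contradicting normality; so either ${\bf t} =_{AC} {\bf 0}$, which is the first case of the statement, or no atom is ${\bf 0}$.

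Next I would classify the normal atoms of the form $\lambda . {\bf u}$ by inspecting the left members of the rules of $R$ that have a product at the root. Normality forces $\lambda \neq 0$ and $\lambda \neq 1$, since otherwise $0 . {\bf u} \longrightarrow {\bf 0}$ or $1 . {\bf u} \longrightarrow {\bf u}$ would fire; it forces ${\bf u} \neq {\bf 0}$ by the rule $\lambda . {\bf 0} \longrightarrow {\bf 0}$; it forbids ${\bf u}$ being a product $\mu . {\bf v}$ by $\lambda . (\mu . {\bf v}) \longrightarrow (\lambda \times \mu) . {\bf v}$; and it forbids ${\bf u}$ being a sum by $\lambda . ({\bf v} + {\bf w}) \longrightarrow \lambda . {\bf v} + \lambda . {\bf w}$. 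The only remaining possibility for a sort-$E$ term ${\bf u}$ is a variable. Moreover $\lambda$, being a subterm of a normal term containing no scalar variables, is a closed $S$-normal scalar. Hence every atom is either a bare variable ${\bf x}_{i}$ or $\lambda . {\bf x}_{i}$ with $\lambda$ a closed $S$-normal scalar distinct from $0$ and $1$.

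Finally I would establish distinctness of the indices by showing that no variable occurs in two atoms. If ${\bf x}_{i}$ occurred in two distinct atoms, then modulo AC I can bring those two atoms next to each other and apply exactly one of the factorization rules $\lambda . {\bf u} + \mu . {\bf u} \longrightarrow (\lambda + \mu) . {\bf u}$, $\lambda . {\bf u} + {\bf u} \longrightarrow (\lambda + 1) . {\bf u}$, or ${\bf u} + {\bf u} \longrightarrow (1 + 1) . {\bf u}$, according to whether both occurrences are scaled, one is scaled, or neither, again contradicting normality. Reordering the atoms by AC so as to list the scaled ones first then yields the announced shape. The main obstacle is precisely this modulo-AC bookkeeping: the two summands sharing a variable need not be syntactically adjacent, so the reduction must be licensed through the $=_{AC}$ step of the definition of $R/AC(+)$-rewriting (as emphasized in the remark distinguishing $R/AC$ from $R,AC$). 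Once that point is handled, the rest is a routine case inspection of the left members of $R$.
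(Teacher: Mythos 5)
Your proposal is correct and follows essentially the same route as the paper: flatten the top-level sum modulo AC into non-sum summands, rule out ${\bf 0}$ summands when there are at least two, classify the normal non-sum terms as variables or products $\lambda.{\bf x}_i$ with $\lambda$ neither $0$ nor $1$ by inspecting the left members of $R$, and derive distinctness of the indices from the factorization rules. The paper's proof is just a terser version of the same case analysis, so there is nothing further to add.
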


\begin{proof}
The term ${\bf t}$ is a sum ${\bf u}_{1} + ... + {\bf u}_{n}$ of
normal terms that are not sums (we take $n = 1$ if ${\bf t}$ is not a
sum). 

A normal term that is not a sum is either ${\bf 0}$, 
a variable, or a term of the form $\lambda . {\bf v}$. In this case, 
$\lambda$ is neither $0$ nor $1$ and 
${\bf v}$ is neither ${\bf 0}$, nor a sum of two vectors 
nor a product of a scalar by a vector, thus it is a variable.

As the term ${\bf t}$ is normal, if $n > 1$ then none of the ${\bf
u}_{i}$ is ${\bf 0}$. Hence, the term ${\bf t}$ is either ${\bf 0}$
or a term of the form

$$\lambda _{1} . {\bf x}_{i_{1}} + ... + \lambda _{k} . {\bf x}_{i_{k}} + 
{\bf x}_{i_{k+1}} + ... + {\bf x}_{i_{k+l}}$$
where $\lambda_{1}, ..., \lambda_{k}$ are neither $0$ nor $1$.
As the term {\bf t} is normal, the indices $i_{1}, ..., i_{k+l}$ are distinct.
\end{proof}

\section{Vectorial spaces}

Given a field 
${\cal K} = \langle K, +, \times, 0, 1 \rangle$
the class of ${\cal K}$-vectorial spaces can be defined as follows.

\begin{definition} [Vectorial space]
\label{def1} 
The structure $\langle E, +, .,
{\bf 0} \rangle$ is a ${\cal K}$-vectorial space if and only if the 
structure $\langle K, +, \times, 0, 1, 
E, +, ., {\bf 0} \rangle$ is a model of the 2-sorted theory.

$$\forall {\bf u} \forall {\bf v} \forall {\bf w}~(
({\bf u} + {\bf v}) + {\bf w} = {\bf u} + ({\bf v} + {\bf w}))$$
$$\forall {\bf u} \forall {\bf v}~(
{\bf u} + {\bf v}  = {\bf v} + {\bf u})$$
$$\forall {\bf u}~(
{\bf u} + {\bf 0}  = {\bf u})$$
$$\forall {\bf u}~\exists {\bf u'}~({\bf u} + {\bf u'} = {\bf 0})$$
$$\forall {\bf u}~
(1. {\bf u} = {\bf u})$$
$$\forall \lambda \forall \mu \forall {\bf u}~
(\lambda . (\mu. {\bf u}) = (\lambda . \mu). {\bf u})$$
$$\forall \lambda \forall \mu \forall {\bf u}~
((\lambda + \mu) . {\bf u} = \lambda . {\bf u} + \mu . {\bf u})$$
$$\forall \lambda \forall {\bf u} \forall {\bf v}~
(\lambda . ({\bf u} + {\bf v}) = \lambda . {\bf u} + \lambda . {\bf v})$$
\end{definition} 

We now prove that, the class of ${\cal K}$-vectorial spaces can be 
defined as the class of models of the rewrite system $R$.

\begin{proposition}
Let ${\cal K} = \langle K, +, \times, 0, 1 \rangle$
be a field. The structure $\langle E, +, .,
{\bf 0} \rangle$ is a ${\cal K}$-vectorial space if and only if the 
structure $\langle K, +, \times, 0, 1, 
E, +, ., {\bf 0} \rangle$ is a model of the 
rewrite system $R$.
\end{proposition}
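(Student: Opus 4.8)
The plan is to prove the two implications separately, matching each rule of the system $R$, read as an equation, with a consequence of the vectorial space axioms of Definition~\ref{def1}, and conversely. The two AC requirements on vector addition in the definition of a model of an AC-rewrite system are exactly the associativity and commutativity axioms, so these correspond trivially in both directions. Likewise, the rules $\mathbf{u} + \mathbf{0} \to \mathbf{u}$, $1.\mathbf{u} \to \mathbf{u}$, $\lambda.(\mu.\mathbf{u}) \to (\lambda.\mu).\mathbf{u}$, $\lambda.\mathbf{u} + \mu.\mathbf{u} \to (\lambda+\mu).\mathbf{u}$ and $\lambda.(\mathbf{u}+\mathbf{v}) \to \lambda.\mathbf{u} + \lambda.\mathbf{v}$ are, read as equations, literally five of the eight axioms (neutrality of $\mathbf{0}$, the unit scalar, associativity of the action, and the two distributivities). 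For these the equivalence ``axiom holds iff the rule is valid'' is a mere rereading of the same equation, and so holds in both directions.

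First I would treat the direction from vectorial space to model of $R$, where it remains to validate the four rules of $R$ not already among the axioms: $0.\mathbf{u} \to \mathbf{0}$, $\lambda.\mathbf{0} \to \mathbf{0}$, $\lambda.\mathbf{u} + \mathbf{u} \to (\lambda+1).\mathbf{u}$ and $\mathbf{u} + \mathbf{u} \to (1+1).\mathbf{u}$. The last two follow equationally by using $1.\mathbf{u} = \mathbf{u}$ to replace a bare $\mathbf{u}$ by $1.\mathbf{u}$ and then applying distributivity over scalar sums. For $0.\mathbf{u} = \mathbf{0}$ I would write $0.\mathbf{u} = (0+0).\mathbf{u} = 0.\mathbf{u} + 0.\mathbf{u}$ and cancel one copy, and for $\lambda.\mathbf{0} = \mathbf{0}$ I would write $\lambda.\mathbf{0} = \lambda.(\mathbf{0}+\mathbf{0}) = \lambda.\mathbf{0} + \lambda.\mathbf{0}$ and cancel. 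The cancellation is where the existence-of-inverse axiom is used, via the standard fact that adding the inverse of $x$ to both sides of $x + x = x$ yields $x = \mathbf{0}$.

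Then I would treat the converse. Every axiom except the existence of an additive inverse is read off directly from the corresponding rule or the AC conditions as above, so the only real content is that axiom. Given $\mathbf{u}$, I would exhibit $(-1).\mathbf{u}$ as its inverse, where $-1$ is the additive inverse of $1$ in the field $\mathcal{K}$, which exists precisely because $\mathcal{K}$ is a field. Then validity of $1.\mathbf{u} \to \mathbf{u}$ gives $\mathbf{u} = 1.\mathbf{u}$, validity of $\lambda.\mathbf{u} + \mu.\mathbf{u} \to (\lambda+\mu).\mathbf{u}$ instantiated at $\lambda = 1$, $\mu = -1$ gives $1.\mathbf{u} + (-1).\mathbf{u} = (1 + (-1)).\mathbf{u} = 0.\mathbf{u}$, and validity of $0.\mathbf{u} \to \mathbf{0}$ gives $0.\mathbf{u} = \mathbf{0}$; hence $\mathbf{u} + (-1).\mathbf{u} = \mathbf{0}$.

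The main obstacle, and the only place where the statement is more than a rereading of identical equations, is this inverse axiom: it is the unique vectorial space axiom with an existential quantifier and the unique one with no syntactic counterpart among the rules of $R$. Its recovery in the converse direction hinges on the field $\mathcal{K}$ supplying the scalar $-1$ together with the seemingly innocuous rule $0.\mathbf{u} \to \mathbf{0}$, whereas in the forward direction that same inverse is exactly what drives the cancellation needed to validate $0.\mathbf{u} \to \mathbf{0}$ and $\lambda.\mathbf{0} \to \mathbf{0}$. Everything else reduces to checking that each remaining rule, read as an equation, is either an axiom or a short equational consequence of the axioms.
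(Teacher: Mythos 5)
Your proposal is correct and follows essentially the same route as the paper: rule-by-rule matching of equations with axioms, with the additive-inverse axiom identified as the only non-trivial point, recovered in the converse via $(-1).{\bf u}$ using $0.{\bf u}={\bf 0}$ and the factorization rule. The only (immaterial) difference is that you derive $0.{\bf u}={\bf 0}$ and $\lambda.{\bf 0}={\bf 0}$ by the doubling-and-cancellation trick, whereas the paper uses a direct chain through ${\bf u}+{\bf u'}={\bf 0}$ for the former and deduces the latter from it via $\lambda.(\mu.{\bf u})=(\lambda\times\mu).{\bf u}$.
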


\begin{proof}
We first check that all the rules of $R$ are valid in all vectorial spaces, 
i.e. that the propositions 

$$({\bf u} + {\bf v}) + {\bf w} = {\bf u} + ({\bf v} + {\bf w})$$
$${\bf u} + {\bf v} = {\bf v} + {\bf u}$$
$${\bf u} + {\bf 0}  = {\bf u}$$
$$0 . {\bf u} = {\bf 0}$$
$$1 . {\bf u} = {\bf u}$$
$$\lambda . {\bf 0} = {\bf 0}$$
$$\lambda . (\mu . {\bf u}) = (\lambda. \mu). {\bf u}$$
$$\lambda . {\bf u} + \mu . {\bf u} = (\lambda + \mu) . {\bf u}$$
$$\lambda . {\bf u} + {\bf u} = (\lambda + 1) . {\bf u}$$
$${\bf u} + {\bf u} =  (1 + 1) . {\bf u}$$
$$\lambda . ({\bf u} + {\bf v}) = \lambda . {\bf u} + \lambda . {\bf v}$$
are theorems of the theory of vectorial spaces. 

Seven of them are axioms of the theory of vectorial spaces, the propositions
$\lambda . {\bf u} + {\bf u} = (\lambda + 1) . {\bf u}$
and
${\bf u} + {\bf u} =  (1 + 1) . {\bf u}$
are consequence of $1 . {\bf u} = {\bf u}$ and 
$\lambda . {\bf u} + \mu . {\bf u} = (\lambda + \mu) . {\bf u}$.
Let us prove that $0. {\bf u} = {\bf 0}$.
Let ${\bf u'}$ be such that  ${\bf u} + {\bf u'}  = {\bf 0}$. Then 
$0 . {\bf u} = 0 . {\bf u} + {\bf 0} 
= 0 . {\bf u} + {\bf u} + {\bf u'}
= 0 . {\bf u} + 1 . {\bf u} + {\bf u'}
= 1 . {\bf u} + {\bf u'}
= {\bf u} + {\bf u'} = {\bf 0}$. 
Finally $\lambda . {\bf 0} = {\bf 0}$ is a consequence of 
$0. {\bf u} = {\bf 0}$ and 
$\lambda . (\mu . {\bf u}) = (\lambda. \mu). {\bf u}$.

Conversely, we prove that all axioms of vectorial spaces are 
valid in all models of $R$. The validity of each of them is a consequence
of the validity of a rewrite rule, except 
$\forall {\bf u} \exists {\bf u'}~({\bf u} + {\bf u'}  = {\bf 0})$
that is a consequence of 
${\bf u} + (-1) . {\bf u} = {\bf 0}$
itself being a consequence of 
$\lambda . {\bf u} + \mu . {\bf u} = (\lambda + \mu) . {\bf u}$
and $0 . {\bf u} = {\bf 0}$.
\end{proof}

\begin{proposition}[Universality]
Let ${\bf t}$ and ${\bf u}$ be two terms whose variables are among 
${\bf x}_{1}, ..., {\bf x}_{n}$. The following propositions are equivalent:
\begin{enumerate}
\item the normal forms of ${\bf t}$ and ${\bf u}$ are identical modulo AC,
\item the equation ${\bf t} = {\bf u}$ is valid in all ${\cal
K}$-vectorial spaces,
\item and the denotation of ${\bf t}$ and ${\bf u}$ in $K^n$ for the 
assignment $\phi = {\bf e}_{1} / {\bf x}_{1}, ..., {\bf e}_{n} / {\bf
  x_{n}}$,
where 
${\bf e}_{1}, ..., {\bf e}_{n}$ is the canonical base of $K^n$,
are identical.
\end{enumerate}
\end{proposition}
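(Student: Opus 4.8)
The plan is to establish the cycle of implications $1 \Rightarrow 2 \Rightarrow 3 \Rightarrow 1$. By Propositions~\ref{termRUS} and~\ref{confRUS} the system $R \cup S$ is terminating and confluent on semi-open terms, so every term whose variables are among the ${\bf x}_i$ (all of sort $E$) has a normal form, unique modulo AC, which I write ${\bf t}{\downarrow}$. Since ${\bf t}$ and ${\bf u}$ contain no scalar variables, neither do ${\bf t}{\downarrow}$ and ${\bf u}{\downarrow}$, so the scalar coefficients occurring in these normal forms are closed $S$-normal terms, i.e. elements of $K$.

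For $1 \Rightarrow 2$ I would show that a reduction step never changes the denotation in a ${\cal K}$-vectorial space. Indeed, by the previous proposition each rule of $R$ is valid, the rules of $S$ are valid because ${\cal K}$ is the field presented by $S$, and two AC-equivalent terms have equal denotation since associativity and commutativity of vector addition are axioms. Hence $\llbracket {\bf t} \rrbracket_{\phi} = \llbracket {\bf t}{\downarrow} \rrbracket_{\phi}$ and $\llbracket {\bf u} \rrbracket_{\phi} = \llbracket {\bf u}{\downarrow} \rrbracket_{\phi}$ for every assignment $\phi$; if ${\bf t}{\downarrow} =_{AC} {\bf u}{\downarrow}$ the two right-hand sides coincide, so ${\bf t} = {\bf u}$ holds in every ${\cal K}$-vectorial space. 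The implication $2 \Rightarrow 3$ is immediate, because $K^n$ is itself a ${\cal K}$-vectorial space and statement 3 is merely the instance of the valid equation at the canonical-basis assignment.

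The substance is in $3 \Rightarrow 1$. I would normalise ${\bf t}$ and ${\bf u}$ and appeal to Proposition~\ref{classification}, which says that each normal form is either ${\bf 0}$ or a sum $\lambda_1 . {\bf x}_{i_1} + \dots + \lambda_k . {\bf x}_{i_k} + {\bf x}_{i_{k+1}} + \dots + {\bf x}_{i_{k+l}}$ with distinct indices and with $\lambda_1, \dots, \lambda_k$ neither $0$ nor $1$. Evaluating such a term at $\phi = {\bf e}_1/{\bf x}_1, \dots, {\bf e}_n/{\bf x}_n$ yields the vector of $K^n$ whose coordinate $i_j$ equals $\llbracket \lambda_j \rrbracket$, whose coordinate $i_{k+m}$ equals $1$, and whose remaining coordinates are $0$, the indices being distinct. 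The key observation is that this coordinate vector records the normal form faithfully: a coefficient $\lambda_j$ is a closed $S$-normal term, hence equal to its own value as an element of $K$, and by the classification it differs from $0$ and from $1$. Thus, reading the denotation coordinate by coordinate, a coordinate $0$ means ${\bf x}_i$ is absent, a coordinate $1$ means ${\bf x}_i$ occurs bare, and any other value $c$ means ${\bf x}_i$ occurs as $c . {\bf x}_i$. This read-off is a two-sided inverse of the evaluation, so equal denotations of ${\bf t}$ and ${\bf u}$ force ${\bf t}{\downarrow}$ and ${\bf u}{\downarrow}$ to be the same sum up to the ordering of its summands, i.e. identical modulo AC.

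The step I expect to be delicate is precisely this last reconstruction, namely the claim that the scalar coefficients can be recovered from the denotation. It rests on two facts that must be isolated cleanly: that the elements of $K$ are represented by the ground $S$-normal forms, so that an $S$-normal closed coefficient is literally its value in ${\cal K}$ and distinct coefficients give distinct field entries; and that Proposition~\ref{classification} forbids the explicit coefficients from being $0$ or $1$, which is what prevents a coefficient from being confused with an absent or a bare variable. Once these are in place the equivalence closes, and $1 \Rightarrow 2 \Rightarrow 3 \Rightarrow 1$ gives the three-way equivalence.
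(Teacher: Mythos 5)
Your proposal is correct and follows essentially the same route as the paper: the easy implications $1 \Rightarrow 2 \Rightarrow 3$, then $3 \Rightarrow 1$ by normalising, invoking Proposition \ref{classification}, and observing that the coordinates of the denotation at the canonical-basis assignment determine the normal form (the paper formalises your ``read-off'' as the \emph{decomposition} of a normal term). Your explicit isolation of the two facts making the reconstruction work --- that closed $S$-normal scalars are the elements of ${\cal K}$, and that explicit coefficients are neither $0$ nor $1$ --- is exactly what the paper's terser argument relies on implicitly.
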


\begin{proof}
Proposition (i) implies proposition (ii) 
and proposition (ii) implies proposition (iii). Let us prove that
proposition (iii) implies proposition (i). 

Let ${\bf t}$ be a normal term 
whose variables are among ${\bf x}_{1}, ..., {\bf x}_{n}$. 
The {\em decomposition} of ${\bf t}$ along 
${\bf x}_{1}, ..., {\bf x}_{n}$ 
is the sequence
$\alpha_{1}, ..., \alpha_{n}$ 
such that if there is a subterm of the form 
$\lambda . {\bf x}_{i}$ in ${\bf t}$, 
then $\alpha_{i} = \lambda$,
if there is a subterm of the form 
${\bf x}_{i}$ in ${\bf t}$,
then $\alpha_{i} = 1$, and $\alpha_{i} = 0$ otherwise.

Assume $\llbracket {\bf t} \rrbracket_{\phi} = \llbracket {\bf u}
\rrbracket_{\phi}$. 
Let 
${\bf e}_{1}, ..., {\bf e}_{n}$ be the canonical base of 
$K^n$ and $\phi = {\bf e}_{1}/{\bf x}_{1}, ..., {\bf e}_{n}/{\bf x}_{n}$.
Call $\alpha_{1}, ..., \alpha_{n}$ the coordinates of 
$\llbracket {\bf t} \rrbracket_{\phi}$ 
in ${\bf e}_{1}, ..., {\bf e}_{n}$. 
Then the decompositions of the normal forms of ${\bf t}$ and ${\bf u}$ are 
both $\alpha_{1}, ..., \alpha_{n}$ and thus they are
identical modulo AC.
\end{proof}

\section{Bilinearity}

\subsection{An algorithm}

\begin{definition} [The rewrite system $R'$]
\label{R'}
Consider a language with four sorts: $K$ for scalars and $E$, $F$,
and $G$ for the vectors of three vector spaces, the symbols $+$, $\times$, 
$0$, $1$ for scalars, three copies of the symbols $+$, $.$ and ${\bf 0}$ for 
each sort $E$, $F$, and $G$ and a symbol $\otimes$ of rank 
$\langle E, F, G \rangle$. 

The system $R'$ is the rewrite system formed by three copies of 
the rules of the system $R$ and the rules

$$({\bf u} + {\bf v}) \otimes {\bf w} \longrightarrow
({\bf u} \otimes {\bf w}) + ({\bf v} \otimes {\bf w})$$
$$(\lambda . {\bf u}) \otimes {\bf v} \longrightarrow
\lambda . ({\bf u} \otimes {\bf v})$$
$${\bf u} \otimes ({\bf v} + {\bf w}) \longrightarrow
({\bf u} \otimes {\bf v}) + ({\bf u} \otimes {\bf w})$$
$${\bf u} \otimes (\lambda . {\bf v}) \longrightarrow 
\lambda . ({\bf u} \otimes {\bf v})$$
$${\bf 0} \otimes {\bf u} \longrightarrow {\bf 0}$$
$${\bf u} \otimes {\bf 0} \longrightarrow {\bf 0}$$
\end{definition}

\begin{proposition}
The rewrite system $R'$ terminates.
\end{proposition}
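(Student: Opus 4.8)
The plan is to continue the method of Proposition~\ref{termR}: exhibit a single interpretation $|\cdot|$ from the vector terms to the integers, compatible with the associativity and commutativity of the three copies of $+$ and strictly monotone in each argument, that strictly decreases under every rule of $R'$. As in Proposition~\ref{termR} the interpretation ignores the scalar arguments, since no rule of $R'$ rewrites a scalar subterm. I would keep the clauses of Proposition~\ref{termR} for the three copies of $+$ and $.$, interpret $\otimes$ by a product, and, crucially, read the interpretation over the integers that are at least $2$, exactly as in Proposition~\ref{termS0}:
$$|{\bf 0}| = 2, \quad |{\bf u} + {\bf v}| = 2 + |{\bf u}| + |{\bf v}|, \quad |\lambda . {\bf u}| = 1 + 2 |{\bf u}|, \quad |{\bf u} \otimes {\bf v}| = |{\bf u}| \, |{\bf v}|.$$
Every interpretation function maps values $\geq 2$ to values $\geq 2$ and is strictly increasing in each argument, and the clause for $+$ is associative and commutative, so the measure is invariant under $=_{AC}$ and a strict decrease at any occurrence propagates to the whole term.

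Then I would check $|l|_{\phi} > |r|_{\phi}$ for each rule and each assignment $\phi$ into the integers $\geq 2$. For the eight rules of each of the three copies of $R$ this is verbatim the computation of Proposition~\ref{termR}, the only change being the value $|{\bf 0}| = 2$, which enlarges the already strictly positive slacks. For the six rules governing $\otimes$ the checks are one-line polynomial inequalities: the two distributivity rules give slacks $2(|{\bf w}| - 1)$ and $2(|{\bf u}| - 1)$, the two scalar-extraction rules give $|{\bf v}| - 1$ and $|{\bf u}| - 1$, and the two annihilation rules give $2|{\bf u}| - 2$; each is strictly positive because every argument is worth at least $2$.

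The only real difficulty is to realize that the additive interpretation of $R$ and the multiplicative interpretation of $\otimes$ can be reconciled only after the base values are pushed strictly above $1$. With the original value $|{\bf 0}| = 0$ the annihilation rule ${\bf u} \otimes {\bf 0} \longrightarrow {\bf 0}$ would read $|{\bf u}| \, |{\bf 0}| = 0 \not> 0$, and with an argument worth $1$ the distributivity and scalar-extraction slacks above collapse to $0$; a product interpretation only ever produces a constant-sized slack, so these boundary values must be excluded. Restricting the interpretation to the integers $\geq 2$ and setting $|{\bf 0}| = 2$ removes both problems at once, after which termination follows from the well-foundedness of $>$ on this domain. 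Should one insist on admitting the value $1$, the same argument goes through with $\otimes$ interpreted instead by the squared product $|{\bf u}|^2 \, |{\bf v}|^2$, which restores a strict decrease even at the boundary.
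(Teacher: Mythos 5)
Your proof is correct and takes essentially the same approach as the paper's: an AC-compatible polynomial interpretation extending that of Proposition~\ref{termR} with a multiplicative clause for $\otimes$ --- the paper simply keeps $|{\bf 0}| = 0$ and inflates the factors instead, setting $|{\bf u} \otimes {\bf v}| = (3|{\bf u}|+2)(3|{\bf v}|+2)$, so that the rules of $R$ need not be re-checked. The only slip is in your closing aside: over a domain admitting the value $1$ the squared product still fails on ${\bf u} \otimes {\bf 0} \longrightarrow {\bf 0}$ (it gives $1 \not> 1$), but your main argument on the domain of integers $\geq 2$ does not rely on that remark.
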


\begin{proof}
We extend the interpretation of Definition \ref{termR} with 

$$|{\bf u} \otimes {\bf v}| = (3 |{\bf u}| + 2)(3 |{\bf v}| + 2)$$
\end{proof}

\begin{proposition}
For any scalar rewrite system $S$, the system $R' \cup S$ terminates.
\end{proposition}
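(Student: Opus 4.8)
The plan is to replay, almost verbatim, the proof of Proposition \ref{termRUS}, now using the interpretation $|~|$ extended to the new symbol by $|{\bf u} \otimes {\bf v}| = (3|{\bf u}| + 2)(3|{\bf v}| + 2)$, which the preceding proposition shows to strictly decrease under every rule of $R'$. First I would record the crucial feature of this measure: it is independent of the scalar subterms of a vector term. Indeed, in the clauses $|{\bf u} + {\bf v}| = 2 + |{\bf u}| + |{\bf v}|$, $|\lambda . {\bf u}| = 1 + 2|{\bf u}|$, $|{\bf 0}| = 0$ and the clause for $\otimes$ above, the scalar $\lambda$ never appears. Since an $S$-reduction rewrites only a subterm of sort $K$, it leaves the vector skeleton fixed, so if ${\bf t}$ $S$-reduces to ${\bf t'}$ then $|{\bf t}| = |{\bf t'}|$.

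With this established, I would take an arbitrary $(R' \cup S)$-reduction sequence. Every $R'$-step strictly decreases $|~|$ while every $S$-step preserves it, so the sequence contains only finitely many $R'$-steps. It remains to exclude an infinite run of pure $S$-steps; but this is impossible because $S$ terminates, acting independently on the finitely many scalar subterms carried by a fixed vector skeleton. Hence the sequence is finite, exactly as in the $R \cup S$ case treated in Proposition \ref{termRUS}.

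The step I expect to require the most attention is the one already discharged by the previous proposition, namely that each of the six new $\otimes$-rules --- the two distributivities, the two scalar-extraction rules $(\lambda . {\bf u}) \otimes {\bf v} \longrightarrow \lambda . ({\bf u} \otimes {\bf v})$ and ${\bf u} \otimes (\lambda . {\bf v}) \longrightarrow \lambda . ({\bf u} \otimes {\bf v})$, and the two annihilations ${\bf 0} \otimes {\bf u} \longrightarrow {\bf 0}$ and ${\bf u} \otimes {\bf 0} \longrightarrow {\bf 0}$ --- strictly lowers $|~|$ for the chosen interpretation. This is a routine but slightly tedious inequality check, and it is precisely the content of the termination of $R'$, so no extra work is needed here; the only genuinely new remark is the scalar-independence of $|~|$, which is what decouples the $S$-steps from the measure and makes the interleaving argument of Proposition \ref{termRUS} apply unchanged.
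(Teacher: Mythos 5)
Your proof is correct and follows exactly the paper's intended argument: the paper's own proof is literally ``As in Proposition \ref{termRUS}'', i.e.\ the same interleaving argument you give, using that the extended measure $|~|$ is unchanged by $S$-steps and strictly decreased by $R'$-steps. Your added remark that the measure ignores scalar subterms is the (implicit) justification the paper relies on, so nothing is missing.
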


\begin{proof}
As in Proposition \ref{termRUS}.
\end{proof}

\begin{proposition}
\label{termR'US0}
The system $R' \cup S_0$ terminates.
\end{proposition}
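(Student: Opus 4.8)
The plan is to reproduce, almost verbatim, the argument of Proposition \ref{termRUS0}, now using the measure $|~|$ extended to $\otimes$ by $|{\bf u} \otimes {\bf v}| = (3|{\bf u}| + 2)(3|{\bf v}| + 2)$, i.e. the very interpretation already used to prove that $R'$ terminates. The whole proof rests on a clean division of labour between the two subsystems: $R'$ strictly decreases $|~|$, while $S_0$ leaves it unchanged, so an infinite $(R' \cup S_0)$-sequence is ruled out by combining this with the termination of $S_0$.

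First I would observe that the measure $|~|$ does not depend in any way on the scalar subterms of a vector term: the clauses $|\lambda . {\bf u}| = 1 + 2|{\bf u}|$, $|{\bf u} + {\bf v}| = 2 + |{\bf u}| + |{\bf v}|$, $|{\bf 0}| = 0$ and $|{\bf u} \otimes {\bf v}| = (3|{\bf u}| + 2)(3|{\bf v}| + 2)$ all ignore the value of the scalar $\lambda$. Since every rule of $S_0$ rewrites a scalar subterm and hence never touches the vector skeleton, an $S_0$-step sends a vector term to one with the same skeleton, so if ${\bf t}$ $S_0$-reduces to ${\bf t'}$ then $|{\bf t}| = |{\bf t'}|$. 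On the other hand, each $R'$-step strictly decreases $|~|$, which is exactly the content of the proof that $R'$ terminates, where the extended interpretation was checked against the three copies of the rules of $R$ and the six rules for $\otimes$. Combining the two facts, along a $(R' \cup S_0)$-reduction sequence the measure strictly decreases at each $R'$-step and stays constant at each $S_0$-step. As $|~|$ takes values in the naturals, only finitely many $R'$-steps can occur; after the last one, the sequence consists of $S_0$-steps only, and since the vector skeleton is then frozen this is in effect an $S_0$-reduction, hence finite by Proposition \ref{termS0}. Therefore the whole sequence is finite.

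The only point deserving care — the nearest thing to an obstacle — is this last step: an infinite tail of $S_0$-steps acts on a term of the full (four-sorted) language rather than on a pure scalar term, so the termination of $S_0$ must be invoked on the scalar redexes \emph{embedded} in a fixed vector skeleton. Since no further $R'$-step occurs, such a tail reduces only within the finitely many scalar subterms, and an infinite tail would project to an infinite $S_0$-reduction, contradicting Proposition \ref{termS0}. This is precisely the step already used tacitly in Proposition \ref{termRUS0}; the addition of $\otimes$ does not disturb it, exactly because $|~|$ remains scalar-independent on $\otimes$-terms as well.
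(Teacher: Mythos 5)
Your proof is correct and follows exactly the paper's route: the paper's own proof is simply ``As in Proposition~\ref{termRUS0}'', i.e.\ the extended interpretation $|~|$ is invariant under $S_0$-steps and strictly decreasing under $R'$-steps, so only finitely many $R'$-steps occur and the remaining $S_0$-tail is finite by Proposition~\ref{termS0}. Your extra care about projecting an infinite $S_0$-tail onto the scalar subterms of a frozen vector skeleton is a detail the paper leaves tacit, but it is the same argument.
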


\begin{proof}
As in Proposition \ref{termRUS0}.
\end{proof}

\begin{proposition}
The rewrite system $R' \cup S_0$ is confluent.
\end{proposition}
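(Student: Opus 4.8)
The plan is to repeat the strategy of Proposition~\ref{confRUS0}. By Proposition~\ref{termR'US0} the system $R' \cup S_0$ terminates, so it is enough to show that all its critical pairs close; local confluence together with termination then yields confluence, and the closure of the critical pairs is a finite, mechanical verification that can again be delegated to a tool such as CIME. The only point requiring human attention is to identify which critical pairs are genuinely new with respect to the single-space case already treated in Proposition~\ref{confRUS0}.

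First I would observe that most overlaps have already been accounted for. The system $R'$ is made of three copies of $R$, acting on the pairwise distinct vector sorts $E$, $F$, $G$, together with the six $\otimes$-rules. Two different copies of $R$ share no vector symbol, and on the left-hand sides of $R$ every scalar position is a variable or one of the normal constants $0$, $1$; hence no left-hand side of one copy overlaps a left-hand side of another. Moreover a copy of $R$ and $S_0$ share no AC symbol (the vector $+$ and the scalar $+$ are different symbols) and no composite scalar subterm occurs in an $R$-left-hand side, so there is no overlap between a copy of $R$ and $S_0$ either. Thus the only critical pairs in which a copy of $R$ participates are its internal ones, which are exactly those of Proposition~\ref{confRUS0}. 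Every genuinely new critical pair must therefore involve an $\otimes$-rule; and since the $\otimes$-rules again carry no composite scalar subterm in their left-hand sides, they do not overlap $S_0$. The new pairs split into two families: (a) two $\otimes$-rules overlapping at the top $\otimes$, and (b) an $\otimes$-rule overlapping a rule of the $E$- or $F$-copy of $R$ inside the argument ${\bf u}$ or ${\bf v}$ of a redex ${\bf u} \otimes {\bf v}$.

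I expect the main obstacle to be the AC bookkeeping for these two families rather than any real difficulty, for every such pair closes because $\otimes$ distributes over $+$ and extracts scalars on both of its arguments. A representative case of family (b) is $(\lambda.{\bf u} + \mu.{\bf u}) \otimes {\bf w}$: factoring inside the left argument first gives $(\lambda + \mu).({\bf u} \otimes {\bf w})$, while distributing first gives $\lambda.({\bf u} \otimes {\bf w}) + \mu.({\bf u} \otimes {\bf w})$, which the $G$-copy of $R$ factors to the same term; the overlaps with the zero-producing rules $0.{\bf u} \longrightarrow {\bf 0}$ and $\lambda.{\bf 0} \longrightarrow {\bf 0}$ close similarly through the rules ${\bf 0} \otimes {\bf u} \longrightarrow {\bf 0}$ and ${\bf u} \otimes {\bf 0} \longrightarrow {\bf 0}$. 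The pair worth singling out lives in family (a): from $(\lambda.{\bf u}) \otimes (\mu.{\bf v})$ the two orders of scalar extraction produce $(\lambda \times \mu).({\bf u} \otimes {\bf v})$ and $(\mu \times \lambda).({\bf u} \otimes {\bf v})$, which agree only because $\times$ is an AC symbol; this is the step that most depends on the AC setting. Once all pairs of families (a) and (b) are checked to close, local confluence holds, and with termination it gives the confluence of $R' \cup S_0$.
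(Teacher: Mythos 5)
Your proposal is correct and takes the same route as the paper: the paper's proof simply says that, as for $R \cup S_0$, local confluence is established by checking that all critical pairs close (mechanically, with termination supplied by Proposition~\ref{termR'US0}). Your additional analysis of which critical pairs are genuinely new, and the observation that the pair arising from $(\lambda.{\bf u}) \otimes (\mu.{\bf v})$ closes only modulo the AC status of $\times$, is a sound and useful elaboration of the same argument.
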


\begin{proof}
As in the proof of Proposition \ref{confRUS0}, we 
prove local confluence by checking that all critical pair close.
\end{proof}

\begin{proposition}
Let $S$ be a scalar rewrite system, then $R'$ commutes with 
$S^*$.
\end{proposition}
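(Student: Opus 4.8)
The plan is to follow the template of Proposition~\ref{comm}, checking commutation rule by rule, and to exploit that $R'$ is built from three copies of $R$ together with the six $\otimes$-rules of Definition~\ref{R'}. Concretely, I must show that whenever ${\bf t}~R'~{\bf u}_1$ and ${\bf t}~S^*~{\bf u}_2$, there is a term ${\bf w}$ with ${\bf u}_1~S^*~{\bf w}$ and ${\bf u}_2~R'~{\bf w}$. For the three copies of the rules of $R$ this is literally Proposition~\ref{comm}, applied in each of the sorts $E$, $F$, $G$, since $S$ touches only scalar subterms and is blind to the vector sort. Hence the only genuinely new content is the six rules governing $\otimes$.

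The key observation, exactly as in Proposition~\ref{comm}, is that in the left-hand side of every rule of $R'$ each subterm of scalar sort is a variable, $0$, or $1$, and that $0$ and $1$ are $S$-normal. Consequently an $S$-redex and an $R'$-redex in the same term can be positioned in only two ways: either at disjoint occurrences, or with the $S$-redex lying strictly inside the instance of a scalar variable of the $R'$-rule. An $S$-step can therefore neither create nor destroy an $R'$-redex. I would first establish the one-step form of commutation --- if ${\bf t}~R'~{\bf u}_1$ and ${\bf t}~S~{\bf u}_2$ then ${\bf u}_1~S^*~{\bf w}$ and ${\bf u}_2~R'~{\bf w}$ for some ${\bf w}$ --- and then lift it to $S^*$ by a routine induction on the length of the scalar derivation, peeling off one $S$-step at a time. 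This lifting is clean precisely because the closing $R'$-step remains a single step, so no diamond blow-up occurs.

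For the one-step check on the $\otimes$-rules I would simply enumerate them. Four of the six --- the two distributivities ${\bf u}\otimes({\bf v}+{\bf w}) \longrightarrow ({\bf u}\otimes{\bf v})+({\bf u}\otimes{\bf w})$ and $({\bf u}+{\bf v})\otimes{\bf w}\longrightarrow({\bf u}\otimes{\bf w})+({\bf v}\otimes{\bf w})$, together with ${\bf 0}\otimes{\bf u}\longrightarrow{\bf 0}$ and ${\bf u}\otimes{\bf 0}\longrightarrow{\bf 0}$ --- contain no scalar subterm at all in their left-hand side, so any $S$-redex is disjoint from the $R'$-redex and the square closes with a single $S$-step and a single $R'$-step. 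The remaining two, $(\lambda.{\bf u})\otimes{\bf v}\longrightarrow\lambda.({\bf u}\otimes{\bf v})$ and ${\bf u}\otimes(\lambda.{\bf v})\longrightarrow\lambda.({\bf u}\otimes{\bf v})$, carry a single scalar variable $\lambda$ that is neither duplicated nor erased across the arrow; hence a reduction $\sigma(\lambda)~S~s$ inside its instance commutes with the $R'$-step and is answered by the corresponding single $S$-step on the reduct, exactly as for the rule $\lambda.(\mu.{\bf u})\longrightarrow(\lambda.\mu).{\bf u}$ of $R$.

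The main obstacle --- really the only point needing attention --- is confirming that the presence of $\otimes$ introduces no scalar-level overlap that the $R$-case did not already settle. Since $\otimes$ has rank $\langle E,F,G\rangle$ and $S$ rewrites only scalars, the symbol $\otimes$ is invisible to $S$, and the single scalar position occurring in any $\otimes$-rule left-hand side is a bare variable; thus no new critical overlap with $S$ can arise. The one situation where a single $S$-step must genuinely be answered by several (namely when a scalar variable is duplicated, as in $\lambda.({\bf u}+{\bf v})\longrightarrow\lambda.{\bf u}+\lambda.{\bf v}$) already lives inside the $R$-copies and is covered by Proposition~\ref{comm}; none of the new $\otimes$-rules duplicate a scalar, so they are in fact strictly easier, and the result follows.
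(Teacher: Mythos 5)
Your proof is correct and follows essentially the same route as the paper, which simply says ``as in Proposition~\ref{comm}'': a rule-by-rule check using the fact that every scalar subterm in a left-hand side of $R'$ is a variable, $0$, or $1$, so an $S$-step can never destroy an $R'$-redex. (One tiny imprecision: for the $\otimes$-rules that duplicate a \emph{vector} variable, an $S$-redex sitting inside that variable's instance may need to be answered by two $S$-steps rather than one, but since the commutation closes with $S^{*}$ this is harmless.)
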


\begin{proof}
As in the proof of Proposition \ref{comm}. 
\end{proof}

\begin{proposition}
Let $S$ be a scalar rewrite system.
The rewrite system $R' \cup S$ is confluent on terms containing
variables of sort $E$, $F$, and $G$ but no variables of sort $K$.
\end{proposition}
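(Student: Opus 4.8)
The plan is to apply the Key Lemma (Proposition \ref{Lemma}) verbatim, exactly as in the proof of Proposition \ref{confRUS}, but now with $R'$ in place of $R$ and on the set of terms of the four-sorted language that carry variables of sort $E$, $F$, and $G$ but no variables of sort $K$ --- the \emph{semi-open} terms for this language. Since the Key Lemma is stated abstractly for three relations on an arbitrary set, nothing prevents instantiating it on this enlarged set of terms, and the whole argument therefore reduces to re-verifying its five hypotheses in the present setting. Each of these hypotheses has already been established by the propositions of this section devoted to $R'$.

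Concretely, I would check the hypotheses in order. First, $S$ is terminating and ground confluent by assumption; because the semi-open terms contain no variables of sort $K$, every scalar subterm occurring in them is closed, so ground confluence upgrades to genuine confluence on semi-open terms, giving that $S$ is terminating and confluent there. Second, $R' \cup S$ terminates by the proposition established just above (the analogue of Proposition \ref{termRUS}). Third, $R' \cup S_0$ is confluent by the proposition proved above through critical-pair checking. Fourth, $S$ subsumes $S_0$ precisely because $S$ is a scalar rewrite system: the defining pairs of a scalar rewrite system are exactly the instances of the $S_0$-rules together with the associativity and commutativity equalities, so any single $S_0$-step (itself taken modulo AC) relates two terms with the same $S$-normal form. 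Fifth, $R'$ commutes with $S^*$ by the commutation proposition proved above (the analogue of Proposition \ref{comm}). With all five hypotheses in hand, the Key Lemma yields that $R' \cup S$ is confluent on semi-open terms, which is the claim.

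I do not expect a genuine obstacle, since the Key Lemma was engineered precisely to absorb, once and for all, the delicate interaction between the vector-rewriting system and the scalar system. The only point demanding a little care --- and the closest thing to a difficulty --- is the confluence of $R' \cup S_0$, which must be re-established for the full system $R'$ rather than inherited from the three-sorted case of Proposition \ref{confRUS0}: the new $\otimes$-rules create fresh critical pairs with the copied $R$-rules (for instance, overlaps of $(\lambda . {\bf u}) \otimes {\bf v} \longrightarrow \lambda . ({\bf u} \otimes {\bf v})$ with the scalar-factoring and distribution rules, and overlaps between the two $\otimes$-distribution rules and the $+$-rules of sort $G$), and all of these must be shown to close. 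That verification, however, is exactly the content of the already-cited confluence proposition for $R' \cup S_0$, and since $R' \cup S_0$ terminates by Proposition \ref{termR'US0}, local confluence suffices and is discharged mechanically by critical-pair analysis; it may therefore be assumed here.
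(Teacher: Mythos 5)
Your proposal is correct and follows exactly the paper's route: the paper's own proof is simply ``Using the Key Lemma,'' relying on the same five hypotheses you verify, each supplied by the preceding propositions of the section (termination of $R' \cup S$, confluence of $R' \cup S_0$ via critical pairs, subsumption of $S_0$ by $S$, and commutation of $R'$ with $S^*$). Your additional remarks on why ground confluence of $S$ upgrades to confluence on semi-open terms and on the new critical pairs created by the $\otimes$-rules are accurate elaborations of what the paper leaves implicit.
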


\begin{proof}
Using the Key Lemma.
\end{proof}

\begin{proposition}
Let ${\bf t}$ be a normal term whose variables of sort $E$ are among 
${\bf x}_{1}, ..., {\bf x}_{n}$, whose variables of sort $F$ are 
among ${\bf y}_{1}, ..., {\bf y}_{p}$, and that has 
no variables of sort $G$ and $K$. 
If ${\bf t}$ has sort 
$E$ or $F$, then it has the same form as in Proposition 
\ref{classification}. If it has sort $G$, then 
it has the form 

$$\lambda_{1} . ({\bf x}_{i_{1}} \otimes {\bf y}_{j_{1}})
+ ... + 
\lambda_{k} . ({\bf x}_{i_{k}} \otimes {\bf y}_{j_{k}})
+ 
({\bf x}_{i_{k+1}} \otimes {\bf y}_{j_{k+1}})
+ ... +
({\bf x}_{i_{k+l}} \otimes {\bf y}_{j_{k+l}})$$
where the pairs of indices $\langle i_{1},j_{1} \rangle, ..., 
\langle i_{k+l}, j_{k+l} \rangle$ are distinct
and $\lambda_{1}, ..., \lambda_{k}$ are neither $0$ nor $1$.
\end{proposition}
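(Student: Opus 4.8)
The plan is to mirror the proof of Proposition \ref{classification}, handling the three sorts $E$, $F$ and $G$ in turn and reducing the first two to the classification already established.

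First I would observe that the symbol $\otimes$ has rank $\langle E, F, G \rangle$, so it can only produce a subterm of sort $G$. Since every constructor producing a term of sort $E$ --- namely the sort-$E$ copies of $+$, $.$ and ${\bf 0}$, together with the variables ${\bf x}_{i}$ --- takes only scalar and sort-$E$ arguments, no occurrence of $\otimes$ can appear inside a term of sort $E$, and symmetrically none inside a term of sort $F$. Hence a term of sort $E$ is built solely from the $E$-copy of the rules of $R$, so its $R'$-normality coincides with its $R$-normality, and Proposition \ref{classification} applies to it verbatim; likewise for sort $F$. This disposes of the first assertion.

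For a term ${\bf t}$ of sort $G$ I would replay the structure of the proof of Proposition \ref{classification}, writing ${\bf t}$ as a sum ${\bf u}_{1} + \cdots + {\bf u}_{m}$ of normal terms that are not sums. Because there are no variables of sort $G$, the only atoms available --- normal sort-$G$ terms that are neither sums, nor ${\bf 0}$, nor of the form $\lambda . {\bf v}$ --- are terms of the form ${\bf a} \otimes {\bf b}$. The heart of the argument is to show that in a normal term each such ${\bf a}$ is a variable ${\bf x}_{i}$ and each ${\bf b}$ a variable ${\bf y}_{j}$, and this is precisely where the six tensor rules of Definition \ref{R'} enter: normality forbids ${\bf a}$ from being a sum (rule $({\bf u} + {\bf v}) \otimes {\bf w}$), a product $\lambda . {\bf u}$ (rule $(\lambda . {\bf u}) \otimes {\bf v}$), or ${\bf 0}$ (rule ${\bf 0} \otimes {\bf u}$), and symmetrically for ${\bf b}$ through the three companion rules. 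As ${\bf a}$ is then a normal sort-$E$ term that is not a sum, not ${\bf 0}$, and not of the form $\lambda . {\bf u}$, the analysis of non-sum normal terms in the proof of Proposition \ref{classification} forces ${\bf a}$ to be one of the ${\bf x}_{i}$; similarly ${\bf b} = {\bf y}_{j}$.

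Having pinned down the atoms, I would conclude exactly as in Proposition \ref{classification}. Each summand ${\bf u}_{r}$ is either ${\bf 0}$, a bare atom ${\bf x}_{i} \otimes {\bf y}_{j}$, or $\lambda . ({\bf x}_{i} \otimes {\bf y}_{j})$ with $\lambda$ neither $0$ nor $1$ (using $0 . {\bf u} \to {\bf 0}$ and $1 . {\bf u} \to {\bf u}$); and if $m > 1$ no summand can be ${\bf 0}$, by the rule ${\bf u} + {\bf 0} \to {\bf u}$. The distinctness of the pairs $\langle i_{r}, j_{r} \rangle$ then follows from the three factorization rules $\lambda . {\bf u} + \mu . {\bf u} \to (\lambda + \mu) . {\bf u}$, $\lambda . {\bf u} + {\bf u} \to (\lambda + 1) . {\bf u}$ and ${\bf u} + {\bf u} \to (1 + 1) . {\bf u}$, read modulo AC with ${\bf u} = {\bf x}_{i_{r}} \otimes {\bf y}_{j_{r}}$: two summands sharing the same tensor atom would expose one of these redexes. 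The step I expect to be the real work is the atom analysis in the sort-$G$ case, since it is the only place where the genuinely new rules are exercised and where the sort-$E$ and sort-$F$ classifications must be invoked on the factors; everything else is a faithful transcription of the proof of Proposition \ref{classification}.
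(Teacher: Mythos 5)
Your proposal is correct and follows essentially the same route as the paper: decompose the normal term into a sum of non-sums, classify the non-sum atoms, show that the arguments of $\otimes$ must be variables because the six tensor rules exclude sums, scalar products and ${\bf 0}$, and derive distinctness of the index pairs from normality with respect to the factorization rules modulo AC. Your explicit treatment of the sort-$E$ and sort-$F$ cases (no $\otimes$ can occur, so Proposition \ref{classification} applies verbatim) is slightly more detailed than the paper, which leaves that case implicit, but the substance is identical.
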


\begin{proof}
The term ${\bf t}$ is a sum ${\bf u}_{1} + ... + {\bf u}_{n}$ of
normal terms that are not sums (we take $n = 1$ if ${\bf t}$ is not a
sum). 

A normal term that is not a sum is either ${\bf 0}$, 
a term of the form ${\bf v} \otimes {\bf w}$, 
or of the form $\lambda . {\bf v}$. In this case, 
$\lambda$ is neither $0$ nor $1$ and 
${\bf v}$ is neither ${\bf 0}$, nor a sum of two vectors 
nor a product of a scalar by a vector, thus it is of the form 
${\bf v} \otimes {\bf w}$. 

In a term of the form ${\bf v} \otimes {\bf w}$, neither 
${\bf v}$ nor ${\bf w}$ is a sum, a product of a scalar by a vector 
or ${\bf 0}$. Thus both ${\bf v}$ and ${\bf w}$ are variables. 

As the term ${\bf t}$ is normal, if $n > 1$ then none of the ${\bf
u}_{i}$ is ${\bf 0}$. Hence, the term ${\bf t}$ is either ${\bf 0}$
or a term of the form
$\lambda_{1} . ({\bf x}_{i_{1}} \otimes {\bf y}_{j_{1}})
+ ... + 
\lambda_{k} . ({\bf x}_{i_{k}} \otimes {\bf y}_{j_{k}})
+ 
({\bf x}_{i_{k+1}} \otimes {\bf y}_{j_{k+1}})
+ ... +
({\bf x}_{i_{k+l}} \otimes {\bf y}_{j_{k+l}})$
where $\lambda_{1}, ..., \lambda_{k}$ are neither $0$ nor $1$.
As the term {\bf t} is normal, the pairs of indices are distinct.
\end{proof}

\subsection{Bilinearity}

\begin{definition} [Bilinear operation]
\label{def1'}
Let $E$, $F$, and $G$ be three vectorial spaces on the same 
field. An operation $\otimes$ from $E \times F$ to $G$ is said to be 
{\em bilinear} if 

$$({\bf u} + {\bf v}) \otimes {\bf w} =
({\bf u} \otimes {\bf w}) + ({\bf v} \otimes {\bf w})$$
$$(\lambda . {\bf u}) \otimes {\bf v} =
\lambda . ({\bf u} \otimes {\bf v})$$
$${\bf u} \otimes ({\bf v} + {\bf w}) =
({\bf u} \otimes {\bf v}) + ({\bf u} \otimes {\bf w})$$
$${\bf u} \otimes (\lambda . {\bf v}) =
\lambda . ({\bf u} \otimes {\bf v})$$
\end{definition}

\begin{proposition}
Let ${\cal K} = \langle K, +, \times, 0, 1 \rangle$ be a field.
The structures 
$\langle E, +, ., {\bf 0} \rangle$, 
$\langle F, +, ., {\bf 0} \rangle$, 
$\langle G, +, ., {\bf 0} \rangle$ are ${\cal K}$-vectorial spaces and 
$\otimes$ is a bilinear operation from $E \times F$ to $G$ if and 
only if 
$\langle K, +, \times, 0, 1, 
E, +, ., {\bf 0},
F, +, ., {\bf 0},
G, +, ., {\bf 0}, \otimes \rangle$ is a model of the system $R'$. 
\end{proposition}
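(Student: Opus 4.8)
The plan is to mimic the proof of the earlier characterization of ${\cal K}$-vectorial spaces as the models of $R$, exploiting the fact that $R'$ decomposes into three disjoint copies of $R$ --- one acting on each of the sorts $E$, $F$, $G$ --- together with the six $\otimes$-rules. Since each copy of $R$ only involves the symbols of a single vector sort, I can invoke the vectorial-space characterization three times, once for each of the structures $\langle E, +, ., {\bf 0}\rangle$, $\langle F, +, ., {\bf 0}\rangle$, $\langle G, +, ., {\bf 0}\rangle$. The equivalence then reduces to showing that, on top of the three vector-space structures, validity of the six $\otimes$-rules is equivalent to bilinearity of $\otimes$.

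For the forward direction, assume the three structures are ${\cal K}$-vectorial spaces and $\otimes$ is bilinear. By the earlier characterization each copy of $R$ is then valid. Four of the six $\otimes$-rules are literally the four bilinearity equations of Definition~\ref{def1'}, hence valid. It remains to check the two rules ${\bf 0} \otimes {\bf u} \longrightarrow {\bf 0}$ and ${\bf u} \otimes {\bf 0} \longrightarrow {\bf 0}$. These are consequences of bilinearity together with the vector-space identity $0 . {\bf w} = {\bf 0}$: writing the zero vector as $0 . {\bf v}$ and using $(\lambda . {\bf u}) \otimes {\bf v} = \lambda . ({\bf u} \otimes {\bf v})$, I obtain ${\bf 0} \otimes {\bf u} = (0 . {\bf v}) \otimes {\bf u} = 0 . ({\bf v} \otimes {\bf u}) = {\bf 0}$, and symmetrically for ${\bf u} \otimes {\bf 0}$ using the fourth bilinearity rule.

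For the converse, assume the whole structure is a model of $R'$. Restricting attention to each vector sort, validity of the corresponding copy of $R$ gives, by the earlier characterization, that $\langle E,+,.,{\bf 0}\rangle$, $\langle F,+,.,{\bf 0}\rangle$ and $\langle G,+,.,{\bf 0}\rangle$ are ${\cal K}$-vectorial spaces. Moreover, the four bilinearity equations of Definition~\ref{def1'} are among the rules of $R'$, so they hold as valid equations in the model; hence $\otimes$ is bilinear. The two auxiliary rules ${\bf 0}\otimes{\bf u} = {\bf 0}$ and ${\bf u}\otimes{\bf 0} = {\bf 0}$ are then automatically valid, but they play no role in recovering bilinearity.

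The argument is essentially routine once the earlier proposition is in hand; the only genuine content is the derivation of the two auxiliary $\otimes$-rules from bilinearity in the forward direction, which is the short calculation above. The main point requiring care is the bookkeeping of the three separate vector sorts: one must verify that the three copies of $R$ are genuinely independent, sharing only the scalar symbols, so that the single-sort characterization applies verbatim to each of $E$, $F$ and $G$.
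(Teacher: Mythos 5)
Your proof is correct and follows essentially the same route as the paper: the three copies of $R$ account for the three vectorial-space structures via the earlier characterization, four of the $\otimes$-rules are literally the bilinearity axioms, and the two rules ${\bf 0} \otimes {\bf u} \longrightarrow {\bf 0}$ and ${\bf u} \otimes {\bf 0} \longrightarrow {\bf 0}$ are derived as consequences of bilinearity together with $0 . {\bf v} = {\bf 0}$. The paper merely asserts that these two extra propositions are consequences of the axioms, whereas you spell out the short calculation, which is a welcome addition.
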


\begin{proof}
The validity of the rules of the three copies of the system $R$, express
that 
$\langle E, +, ., {\bf 0} \rangle$, 
$\langle F, +, ., {\bf 0} \rangle$, 
$\langle G, +, ., {\bf 0} \rangle$ are ${\cal K}$-vectorial spaces.
The validity of the six other rules is the validity of the axioms of
Definition \ref{def1'} plus the two extra propositions
${\bf 0} \otimes {\bf u} = {\bf 0}$ and
${\bf u} \otimes {\bf 0} = {\bf 0}$
that are consequences of these axioms.
\end{proof}

\begin{definition} [Tensorial product]
Let $E$ and $F$ be two vectorial spaces, the pair formed by the vectorial 
space $G$ and the bilinear operation from $E \times F$ to $G$ is a
{\em tensorial product} of $E$ and $F$ if for all bases 
$({\bf e}_{i})_{i \in I}$ of $E$ and $({\bf e'}_{j})_{j \in J}$ of $F$ the 
family $({\bf e}_{i} \otimes {\bf e'}_{j})_{\langle i,j \rangle}$ is a 
base of $G$. 
\end{definition}

\begin{example}
Let $\otimes$ be the unique bilinear operation such 
that 
${\bf e}_{i} \otimes {\bf e'}_{j} = {\bf e''}_{p(i-1)+j}$
where 
${\bf e}_{1}, ..., {\bf e}_{n}$ is the canonical base of $K^n$, 
${\bf e'}_{1}, ..., {\bf e'}_{p}$ that of of $K^p$, and 
${\bf e''}_{1}, ..., {\bf e''}_{np}$ that of $K^{np}$. 
Then $K^{np}$ together with $\otimes$ is the tensorial product of 
$K^n$ and $K^p$. 
\end{example}

\begin{proposition}[Universality]
Let ${\bf t}$ and ${\bf u}$ be two terms whose variables of sort $E$ are among 
${\bf x}_{1}, ..., {\bf x}_{n}$, whose variables of sort $F$ are 
among ${\bf y}_{1}, ..., {\bf y}_{p}$, and that have 
no variables of sort $G$ and $K$. The following propositions are equivalent:
\begin{enumerate}
\item the normal forms of ${\bf t}$ and ${\bf u}$ are identical modulo AC,
\item the equation ${\bf t} = {\bf u}$ is valid in all structures formed 
by three vectorial spaces and a bilinear operation,
\item the equation ${\bf t} = {\bf u}$ is valid in all structures formed 
by two vectorial spaces and their tensorial product,
\item and the denotation of ${\bf t}$ and ${\bf u}$ in $K^{np}$ for the 
assignment 

$$\phi = {\bf e}_{1} / {\bf x}_{1}, ..., {\bf e}_{n} / {\bf x_{n}}, 
{\bf e'}_{1} / {\bf y}_{1}, ..., {\bf e'}_{p} / {\bf y_{p}}$$
where 
${\bf e}_{1}, ..., {\bf e}_{n}$ is the canonical base of $K^n$,
${\bf e'}_{1}, ..., {\bf e'}_{p}$ that of $K^p$
and $\otimes$ is the 
unique bilinear operation such 
that 
${\bf e}_{i} \otimes {\bf e'}_{j} = {\bf e''}_{p(i-1)+j}$
where 
${\bf e''}_{1}, ..., {\bf e''}_{np}$ is the canonical base of $K^{np}$. 
\end{enumerate}
\end{proposition}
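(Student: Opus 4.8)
The plan is to follow exactly the pattern of the Universality proposition for vectorial spaces, establishing the cycle of implications (i) $\Rightarrow$ (ii) $\Rightarrow$ (iii) $\Rightarrow$ (iv) $\Rightarrow$ (i). Three of these four implications are immediate, and essentially the entire weight of the argument falls on (iv) $\Rightarrow$ (i), which I would prove by a decomposition argument generalizing the one used in the vectorial case, the new ingredient being the classification of normal forms of sort $G$.

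For (i) $\Rightarrow$ (ii): if the normal forms of ${\bf t}$ and ${\bf u}$ coincide modulo AC, then, since every rule of $R'$ is valid in any structure formed by three ${\cal K}$-vectorial spaces and a bilinear operation (by the proposition characterizing the models of $R'$) and AC-equal terms have equal denotations, we obtain $\llbracket {\bf t}\rrbracket_{\phi} = \llbracket {\bf u}\rrbracket_{\phi}$ in every such structure. For (ii) $\Rightarrow$ (iii): a tensorial product of two vectorial spaces is in particular a structure formed by three vectorial spaces and a bilinear operation, so validity in all the latter forces validity in all the former. For (iii) $\Rightarrow$ (iv): by the Example following the definition of tensorial product, $K^{np}$ equipped with the operation $\otimes$ determined by ${\bf e}_{i} \otimes {\bf e'}_{j} = {\bf e''}_{p(i-1)+j}$ is a tensorial product of $K^n$ and $K^p$, and applying (iii) to this particular structure together with the assignment $\phi$ yields (iv).

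It remains to prove (iv) $\Rightarrow$ (i), the heart of the argument. Since $R' \cup S$ is confluent and terminating on semi-open terms, ${\bf t}$ and ${\bf u}$ possess normal forms, unique modulo AC, and it suffices to show these coincide; I would argue by cases on the common sort of ${\bf t}$ and ${\bf u}$. A term of sort $E$ cannot contain $\otimes$, since no operation of the language returns to sort $E$ from sort $G$; thus terms of sort $E$ are precisely the terms handled by the Universality proposition for vectorial spaces, and the claim for sort $E$ (with denotations taken in $K^n$) is that proposition, the case of sort $F$ being symmetric with denotations in $K^p$. For the genuinely new case of sort $G$, I would introduce, by analogy with the earlier proof, the \emph{decomposition} of a normal term of sort $G$ as the family $(\alpha_{ij})$ defined, using the classification of normal forms for $R'$ established above, by $\alpha_{ij} = \lambda$ when $\lambda . ({\bf x}_{i} \otimes {\bf y}_{j})$ is a summand, $\alpha_{ij} = 1$ when ${\bf x}_{i} \otimes {\bf y}_{j}$ is a summand, and $\alpha_{ij} = 0$ otherwise; this is well defined precisely because that classification guarantees that the pairs of indices occurring in a normal term are distinct. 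The key computation is that under the assignment $\phi$ of clause (iv) the denotation reads off the decomposition: since $\llbracket {\bf x}_{i} \otimes {\bf y}_{j}\rrbracket_{\phi} = {\bf e}_{i} \otimes {\bf e'}_{j} = {\bf e''}_{p(i-1)+j}$ and denotation commutes with $+$ and $.$, one gets $\llbracket {\bf t}\rrbracket_{\phi} = \sum_{i,j} \alpha_{ij}\,{\bf e''}_{p(i-1)+j}$. Because the map $\langle i,j\rangle \mapsto p(i-1)+j$ is a bijection from $\{1,\dots,n\} \times \{1,\dots,p\}$ onto $\{1,\dots,np\}$, distinct pairs contribute to distinct coordinates of $K^{np}$, so the family $(\alpha_{ij})$ is exactly the coordinate vector of $\llbracket {\bf t}\rrbracket_{\phi}$. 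Hence $\llbracket {\bf t}\rrbracket_{\phi} = \llbracket {\bf u}\rrbracket_{\phi}$ forces the normal forms of ${\bf t}$ and ${\bf u}$ to have the same decomposition, and two normal terms of sort $G$ with the same decomposition are identical modulo AC, which is (i).

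The main obstacle is concentrated in (iv) $\Rightarrow$ (i), and more precisely in justifying that the single structure $K^{np}$ suffices to separate all inequivalent normal forms of sort $G$; this rests entirely on the bijectivity of $\langle i,j\rangle \mapsto p(i-1)+j$, which amounts to the linear independence of the $np$ products ${\bf e}_{i} \otimes {\bf e'}_{j}$ in $K^{np}$. Once this is secured, the remaining implications are formal, given the characterization of the models of $R'$ and the tensorial-product example. The only subsidiary point demanding care is the cross-sort reduction, namely checking that terms of sorts $E$ and $F$ really do fall under the earlier Universality proposition, so that no genuinely new work is required for those sorts.
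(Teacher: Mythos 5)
Your proposal is correct and follows essentially the same route as the paper: the chain (i) $\Rightarrow$ (ii) $\Rightarrow$ (iii) $\Rightarrow$ (iv) is treated as immediate, and (iv) $\Rightarrow$ (i) is proved by reading off the decomposition $(\alpha_{p(i-1)+j})$ of a normal form of sort $G$ from the coordinates of its denotation in $K^{np}$, exactly as in the paper. Your added remarks on the bijectivity of $\langle i,j\rangle \mapsto p(i-1)+j$ and on the sorts $E$ and $F$ only make explicit what the paper leaves implicit.
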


\begin{proof}
Proposition (i) implies proposition (ii), 
proposition (ii) implies proposition (iii) and 
proposition (iii) implies proposition (iv).
Let us prove that proposition (iv) implies proposition (i). 

Let ${\bf t}$ be a normal term of sort $G$ with variables 
of sort $E$ among ${\bf x}_{1}, ..., {\bf x}_{n}$, variables 
of sort $F$ among
${\bf y}_{1}, ..., {\bf y}_{p}$, and no variables of sort $G$ and $K$. 
The {\em decomposition} of ${\bf t}$ along 
${\bf x}_{1}, ..., {\bf x}_{n}$, 
${\bf y}_{1}, ..., {\bf y}_{p}$, 
is the sequence
$\alpha_{1}, ..., \alpha_{np}$ 
such that if there is a subterm of the form 
$\lambda . ({\bf x}_{i} \otimes {\bf y}_{j})$ in ${\bf t}$, 
then $\alpha_{p (i - 1) +j} = \lambda$,
if there is a subterm of the form 
${\bf x}_{i} \otimes {\bf y}_{j}$ in ${\bf t}$,
then $\alpha_{p (i - 1) + j} = 1$, and $\alpha_{p (i - 1) + j} = 0$ otherwise.

Assume $\llbracket {\bf t} \rrbracket_{\phi} = \llbracket {\bf u}
\rrbracket_{\phi}$. 
Call $\alpha_{1}, ..., \alpha_{np}$ the coordinates of 
$\llbracket {\bf t} \rrbracket_{\phi}$ 
in ${\bf e''}_{1}, ..., {\bf e''}_{np}$. 
Then the decompositions of the normal forms of ${\bf t}$ and ${\bf u}$ are 
both $\alpha_{1}, ..., \alpha_{np}$ and thus they are
identical modulo AC.
\end{proof}

\section*{Conclusion}

We usually define an algebraic structure by three components: a set,
some operations defined on this set and some propositions that must be
valid in the structure.
For instance a ${\cal K}$-vectorial space is defined
by a set $E$, the operations ${\bf 0}$, $+$ and $.$ and the equations
of Definition \ref{def1}. 

We can, in a more computation-oriented way, define an algebraic
structure by a set, operations on this set and an algorithm on terms
constructed upon 
these operations that must be valid in the structure.
For instance a ${\cal K}$-vectorial space is defined
by a set $E$, the operations ${\bf 0}$, $+$ and $.$ and the 
algorithm $R$.

This algorithm is a well-known algorithm in linear algebra: it is the
algorithm that transforms any linear expression into a linear
combination of the unknowns. This algorithm is, at a first look,
only one among the many algorithms used in linear algebra, 
but it completely
defines the notion of vectorial space: a vectorial space is any
structure where this algorithm is valid, it is any structure
where linear expressions can be transformed this way into linear
combinations of the unknowns.

\section*{Acknowledgements}

The authors want to thank \'Evelyne Contejean, Claude Kirchner and
Claude March\'e for comments on a previous draft of this paper.

\end{document}